\newcommand{\deff}{\mbox{$\stackrel{\rm def}{=}$}}
\newcommand{\sbinom}[2]{\left[ \begin{array}{c} #1 \\ #2 \end{array} \right] }
\newcommand{\field}[1]{\mathbb{#1}}
\newcommand{\cC}{{\cal C}}
\newcommand{\cM}{{\cal M}}
\newcommand{\sP}{\field{P}}
\newcommand{\sG}{\field{G}}
\DeclareMathAlphabet{\mathbfsl}{OT1}{cmr}{bx}{it}
\newcommand{\uuu}{\kern-1pt\mathbfsl{u}\kern-0.5pt}
\newcommand{\vvv}{\kern-1pt\mathbfsl{v}\kern-0.5pt}
\newcommand{\myboxplus}{\kern1pt\mbox{\small$\boxplus$}}
\makeatletter \DeclareRobustCommand{\sbinom}{\genfrac[]\z@{}}
\newcommand{\G}[2]{\sbinom{{#1}\kern-1pt}{{#2}\kern-1pt}}
\newcommand{\Gq}[2]{\sbinom{{#1}\kern-0.25pt}{{#2}\kern-0.25pt}}
\newcommand{\Ps}{\smash{{\sP\kern-2.0pt}_q\kern-0.5pt(n)}}
\newcommand{\sPs}{\smash{{\sP\kern-1.5pt}_q(n)}}
\newcommand{\Ptwo}{\smash{{\sP\kern-2.0pt}_2\kern-0.5pt(n)}}
\newcommand{\Ptwom}{\smash{{\sP\kern-2.0pt}_2\kern-0.5pt(m)}}
\newcommand{\Ptwonm}{\smash{{\sP\kern-2.0pt}_2\kern-0.5pt(n+m)}}
\newcommand{\Ptwoa}{\smash{{\sP\kern-2.0pt}_2\kern-0.5pt(1)}}
\newcommand{\Ptwob}{\smash{{\sP\kern-2.0pt}_2\kern-0.5pt(2)}}
\newcommand{\Ptwoc}{\smash{{\sP\kern-2.0pt}_2\kern-0.5pt(3)}}
\newcommand{\Ptwod}{\smash{{\sP\kern-2.0pt}_2\kern-0.5pt(4)}}
\newcommand{\Ptwoe}{\smash{{\sP\kern-2.0pt}_2\kern-0.5pt(5)}}
\newcommand{\Ptwof}{\smash{{\sP\kern-2.0pt}_2\kern-0.5pt(6)}}
\newcommand{\Ptwokm}{\smash{{\sP\kern-2.0pt}_2\kern-0.5pt(2k-1)}}
\newcommand{\Pone}{\smash{{\sP\kern-2.5pt}_2\kern-0.5pt(n{-}1)}}
\newcommand{\Gr}{\smash{{\sG\kern-1.5pt}_q\kern-0.5pt(n,k)}}
\newcommand{\Gi}{\smash{{\sG\kern-1.5pt}_q\kern-0.5pt(n,i)}}
\newcommand{\Gj}{\smash{{\sG\kern-1.5pt}_q\kern-0.5pt(n,j)}}
\newcommand{\Grmk}{\smash{{\sG\kern-1.5pt}_q\kern-0.5pt(n,n-k)}}
\newcommand{\Grdk}{\smash{{\sG\kern-1.5pt}_q\kern-0.5pt(2k,k)}}
\newcommand{\Grekappa}{\smash{{\sG\kern-1.5pt}_q\kern-0.5pt(n,e+1-\kappa)}}
\newcommand{\Grtwoekappa}{\smash{{\sG\kern-1.5pt}_q\kern-0.5pt(n,2e+1-\kappa)}}
\newcommand{\Gremkappa}{\smash{{\sG\kern-1.5pt}_q\kern-0.5pt(n,e-\kappa)}}
\newcommand{\Gn}{\smash{{\sG\kern-1.5pt}_2\kern-0.5pt(n,n{-}1)}}
\newcommand{\Gnq}{\smash{{\sG\kern-1.5pt}_q\kern-0.5pt(n,n{-}1)}}
\newcommand{\Gone}{\smash{{\sG\kern-1.5pt}_2\kern-0.5pt(n,1)}}
\newcommand{\Gqone}{\smash{{\sG\kern-1.5pt}_q\kern-0.5pt(n,1)}}
\newcommand{\GTwo}{\smash{{\sG\kern-1.5pt}_2\kern-0.5pt(n,k)}}
\newcommand{\GTwonk}[2]{{\smash{{\sG\kern-1.5pt}_2\kern-0.5pt({#1},{#2})}}}
\newcommand{\Gnk}{\smash{{\sG\kern-1.5pt}_2\kern-0.5pt(n,n{-}k)}}
\newcommand{\Greone}{\smash{{\sG\kern-1.5pt}_q\kern-0.5pt(n,e{+}1)}}
\newcommand{\Gretwo}{\smash{{\sG\kern-1.5pt}_q\kern-0.5pt(n,e{+}2)}}
\newcommand{\be}[1]{\begin{equation}\label{#1}}
\newcommand{\ee}{\end{equation}}
\newcommand{\Cref}[1]{Co\-rol\-la\-ry\,\ref{#1}}
\newtheorem{theorem}{Theorem}
\newtheorem{lemma}{Lemma}
\newtheorem{remark}{Remark}
\newtheorem{cor}{Corollary}
\newtheorem{definition}{Definition}
\newtheorem{example}{Example}
\newtheorem{construction}{Construction}
\begin{document}

\title{\huge \textbf{Constrained Codes for Rank Modulation}\vspace{-1ex}}
\author{\authorblockN{\textbf{Sarit Buzaglo}}
\authorblockA{Department of Computer Science\\
Technion-Israel Institute of Technology\\
Haifa 32000, Israel \\
Email: \emph{sarahb@cs.technion.ac.il}\vspace*{-5.0ex}} \and
\authorblockN{\textbf{Eitan Yaakobi}}
\authorblockA{Electrical Engineering Department\\
California Institute of Technology\\
Pasadena, CA 91125, USA \\
Email: \emph{yaakobi@caltech.edu}\vspace*{-5.0ex}}}

\maketitle

\begin{abstract}
Motivated by the rank modulation scheme, a recent work by Sala and Dolecek explored the study of constraint codes for permutations. The constraint studied by them is inherited by the inter-cell interference phenomenon in flash memories, where high-level cells can inadvertently increase the level of low-level cells.

In this paper, the model studied by Sala and Dolecek is extended into two constraints. A permutation $\sigma \in S_n$ satisfies the \emph{two-neighbor $k$-constraint} if for all $2 \leq i \leq n-1$ either $|\sigma(i-1)-\sigma(i)|\leq k$ or $|\sigma(i)-\sigma(i+1)|\leq k$, and it satisfies the \emph{asymmetric two-neighbor $k$-constraint} if for all $2 \leq i \leq n-1$, either $\sigma(i-1)-\sigma(i) < k$ or $\sigma(i+1)-\sigma(i) < k$. We show that the capacity of the first constraint is $(1+\epsilon)/2$ in case that $k=\Theta(n^{\epsilon})$ and the capacity of the second constraint is 1 regardless to the value of $k$. We also extend our results and study the capacity of these two constraints combined with error-correction codes in the Kendall's $\tau$ metric.
\end{abstract}

\vspace{0ex}
\section{Introduction}
Flash memories are, by far, the most important type of non-volatile memory (NVM) in use today. Flash devices are employed widely in mobile, embedded, and mass-storage applications, and the growth in this sector continues at a staggering pace. At the high level, flash memories are comprised of block of cells. These cells can have binary values, i.e. they store a single bit, or can have multiple levels and thus can store multiple bits in a cell.

One of the main challenges in flash memories is to exactly program each cell to its designated level. Furthermore, flash memories suffer from the cell leakage problem, by which a charge may leak from the cells and thus cause reading errors~\cite{CM99}. In order to overcome these difficulties, the novel framework of \textbf{\emph{rank modulation codes}} was introduced in~\cite{JMSB09}. Under this setup, the information is represented by permutations which are derived by the relative charge levels of the cells, rather than by their absolute levels.

Another conspicuous property of flash memories, resulting from its rapid growth density, is the appearance of inter-cell interference (ICI). The level of a cell, called a {\em victim cell} might increase, if its neighbor cells are programmed to significantly higher levels~\cite{LHC02}. The ICI is caused by the parasitic capacitance between neighboring cells, and in particular, multilevel cell programming is severely influenced by this effect.

Motivated by the rank modulation scheme and the ICI phenomenon, a recent research by Sala and Dolecek~\cite{SaDo13} proposed the study of constraint codes for permutations. Under this setup, the constraint is invoked over the permutation's symbols. In the model studied in~\cite{SaDo13}, the authors explored the constraint in which the levels difference between adjacent cells is upper bounded. This constraint prevents the scenario in which a high-level cell affects its low-level neighbor cell. Namely, let $S_n$ be the set of all permutations with $n$ elements, then it was said that a permutation $\sigma \in S_n$ satisfies the \textbf{\emph{single-neighbor $k$-constraint}} if $|\sigma_i-\sigma_{i+1}|\leq k$ for all $1\leq i\leq n-1$. For example, the permutation $\sigma=[3,1,2,4,5]$ satisfies the single-neighbor 2-constraint but not the single-neighbor 1-constraint. For any positive integers $k$ and $n$, if $U_{n,k}$ is the set of permutations that meet the single-neighbor $k$-constraint, then the capacity of this constraint is defined as $C_1(\epsilon) = \lim_{n\rightarrow \infty} \frac{\log |U_{n,k}|}{\log n!}$, where $k=\left\lceil n^\epsilon\right\rceil$. The main result from~\cite{SaDo13} states that for $0\leq \epsilon \leq 1$, $C_1(\epsilon) = \epsilon$. All logarithms in this work are taken with base 2.

In this work, the single-neighbor constraint is naturally extended for two neighbors as it better captures the ICI phenomenon. This extension is applied both symmetrically and asymmetrically. In the symmetric version, as proposed in~\cite{SD13}, a permutation satisfies the constraint if the difference between the level of a cell and the level of one of its neighbors is bounded by some prescribed value $k$. In the asymmetric version, we will constrain the level difference only for sequences of the form high-low-high. This constraint is motivated by the fact that the ICI in flash memories mainly affects sequences of the form high-low-high and not the other ones. Thus, as in the single-neighbor constraint, we similarly define the capacity of these two constraints and show that in the symmetric constraint the capacity is $(1+\epsilon)/2$ and in the asymmetric constraint the capacity equals 1 regardless to the constraint value $\epsilon$.

The constraints studied in this paper as well as in~\cite{SaDo13} are affective in reducing the errors caused by the ICI. However, random errors may still happen. While there are several metrics under which error-correcting codes for permutations were studied, we choose to focus on the Kendall's $\tau$ metric due to its high applicability to the error behavior in the rank modulation scheme~\cite{JSB10}. Hence, we will study codes with minimum distance according to the Kendall's $\tau$ distance that yet consist of only permutations that satisfy the constraints studied in the paper.

The rest of the paper is organized as follows. In Section~\ref{sec:def}, we introduce the notations and formally define the constraints studied in the paper. Section~\ref{sec:2NC} studies the capacity of the symmetric constraint and Section~\ref{sec:AsymmConst} studies the capacity of the asymmetric constraint. In Section~\ref{sec:ECC}, we extend our results and study the capacity of these two constraints combined with error-correction codes in the Kendall's $\tau$ metric. Due to the lack of space, some of the proofs in the paper are omitted.

\section{Definitions and Notations}\label{sec:def}

In this section we formally define the constraints studied in the paper and introduce the notations and tools we will use in their solutions.
The set of $n$ elements $\{1,2,\ldots,n\}$ will be denoted by $[n]$. For two integers $a,b$, $a<b$, $[a,b]$ is the set of $b-a+1$ elements $[a,a+1,a+2,\ldots,b]$. Let $S_n$ be the set of all permutations on $[n]$, and let $S([a,b])$ be the set of all permutations on $[a,b]$. We denote a permutation $\sigma$ of length $n$ by $\sigma=[\sigma(1),\sigma(2),\ldots,\sigma(n)]$.
\begin{remark}
We use permutations of length $n$ in order to represent the rankings of $n$ flash memory cells in the rank modulation scheme. Note that there are two alternatives to represent the cells ranking by a permutation. The first one is the method we use in this paper where $\sigma(i)$ corresponds to the ranking of the $i$-th cell. In the second approach $\sigma(i)$ is the index of the cell with the $i$-th rank. While these two representations are dual to each other, we chose the first one for the convenience of describing the constraints in our work.
\end{remark}

\begin{definition}
Let $n$ and $k$ be positive integers such that $k<n$. A permutation $\sigma\in S_n$ is said to satisfy the \textbf{two-neighbor $k$-constraint} if for all $i$, $2\leq i\leq n-1$, either $|\sigma(i-1)-\sigma(i)|\leq k$ or $|\sigma(i)-\sigma(i+1)|\leq k$. We denote by $A_{n,k}$ the set of all permutations in $S_n$ satisfying the two-neighbor $k$-constraint.
A \textbf{two-neighbor $k$-constrained code} is a subset of $A_{n,k}$. Finally, for $0\leq \epsilon \leq 1$, the \textbf{capacity} of the two-neighbor $k$-constraint, where $k=\lceil n^{\epsilon}\rceil$, is
defined as
$$ C(\epsilon)=\limsup_{n\rightarrow \infty}\frac{\log |A_{n,k}|}{\log n!}. $$
\end{definition}
For example, the permutation $\sigma=[4,7,5,3,1,2,6]$ satisfies the two-neighbor 2-constraint but not
the two-neighbor $1$-constraint. Clearly, if $k = n-1$ then $A_{n,k}=S_n$.
Note that the two-neighbor constraint does not distinguish between high-low-high and low-high-low patterns and thus eliminates them both. A weaker constraint which may fit better to the inter-cell interference problem is defined next.
\begin{definition}
Let $n$ and $k$ be positive integers such that $k < n$. A permutation $\sigma\in S_n$ is said to satisfy the \textbf{asymmetric two-neighbor $k$-constraint} if for all $i$, $2\leq i\leq n-1$, either $\sigma(i-1)-\sigma(i)\leq k$ or $\sigma(i+1)-\sigma(i)\leq k$. The set of all permutations satisfying the asymmetric two-neighbor $k$-constraint is denoted by $B_{n,k}$. An \textbf{asymmetric two-neighbor $k$-constrained code} is a subset of $B_{n,k}$ and the constraint's capacity, for $0\leq \epsilon \leq 1$, is defined as $\widetilde{C}(\epsilon)=\limsup_{n\rightarrow \infty}\frac{\log |B_{n,k}|}{\log n!}$, where $k=\lceil n^{\epsilon}\rceil$.
\end{definition}
For example, the permutation $[5,3,1,6,4,2]$ satisfies the asymmetric two-neighbor $2$-constraint but not the asymmetric two-neighbor $1$-constraint. Note that every permutation which satisfies the two-neighbor $k$-constraint satisfies the asymmetric two-neighbor $k$-constraint as well and thus for any $0\leq \epsilon \leq 1$, $C(\epsilon)\leq \widetilde{C}(\epsilon)$.
\begin{remark}
We chose in the capacity definitions the supremum limit versions since the limits do not necessarily exist. However, we shall later see that these limits indeed exist.
\end{remark}

In the construction of two-neighbor $k$-constrained codes we will use some of the tools from multi-permutations, which are the natural generalization of permutations. A balanced multi-set $\cM_{\ell,m}=\{1^m,2^m,\ldots,\ell^m\}$ is a collection of the numbers in $[\ell]$, each appears $m$ times. The set of all multi-permutations over $\cM_{\ell,m}$ is denoted by $P_{\ell,m}$. This definition can be extended for multi-sets which are not balanced, however we will not need this generalization for our purposes. For a multi-permutation $\sigma\in P_{\ell,m}$, we distinguish between appearances of the same number in $\sigma$ according to their positions in $\sigma$. By abuse of notation, we sometimes write for $i\in[\ell], r\in[m]$, $\sigma(j)=i_r$ and $j=\sigma^{-1} (i_r)$ to indicate that the $r$-th appearance of $i$ is in the $j$-th position of $\sigma$. For example, if $\cM_{3,2}=\{1^2, 2^2, 3^2\}$ then $\sigma=[1,3,1,2,3,2]$ is a multi-permutation in $S_{3,2}$, and $\sigma(3)=1_2$, $3=\sigma^{-1}(1_2)$.

\section{The Two-Neighbor Constraint}\label{sec:2NC}
In this section we study the two-neighbor constraint and in particular find its capacity. This will be done first by a construction of two-neighbor $k$-constrained codes which provides a lower bound on the capacity. Then, we will show how to bound the size of the set $A_{n,k}$ which will result with an upper bound on the capacity that will coincide with the lower bound.

For a multi-permutation $\rho\in P_{\ell,m}$ and permutations $\gamma_1,\gamma_2,\ldots,\gamma_\ell$, such that $\gamma_i\in S([(i-1)m+1,im])$ for $i\in [\ell]$, we define $\rho(\gamma_1,\gamma_2,\ldots,\gamma_\ell)$ to be the permutation $\alpha\in S_{\ell m}$, such that $\alpha(j)=\gamma_i(r)$ if $\rho(j)=i_r$. For example, let $\rho=[1,2,1,3,2,3]\in P_{3,2}$ and let $\gamma_1=[2,1]$, $\gamma_2=[3,4]$ and $\gamma_3=[6,5]$. Then $\rho(\gamma_1,\gamma_2,\gamma_3)=[2,3,1,6,4,5]$. The following lemma will be useful in the construction we present in this section.
\begin{lemma}\label{lem:rhoInjection}
Let $\rho_1,\rho_2\in P_{\ell,m}$ and let $\gamma_1,\gamma_2,\ldots,\gamma_{\ell}$, $\delta_1,\delta_2,\ldots,\delta_{\ell}$, where $\gamma_i,\delta_i\in S([(i-1)m+1,im]$, for $i\in [\ell]$. If $\sigma=\rho_1(\gamma_1,\gamma_2,\ldots,\gamma_{\ell})=\rho_2(\delta_1,\delta_2,\ldots,\delta_{\ell})$, then $\rho_1=\rho_2$ and $\gamma_i=\delta_i$, for $i\in[\ell]$.
\end{lemma}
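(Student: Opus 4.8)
The plan is to recover both the multi-permutation and the inner permutations directly from $\sigma$, thereby showing that the decomposition $\sigma = \rho(\gamma_1, \gamma_2, \ldots, \gamma_\ell)$ is unique. The key observation is that the blocks $[(i-1)m+1, im]$, for $i \in [\ell]$, partition $[\ell m]$, and by construction $\gamma_i$ takes its values precisely inside the $i$-th block. Consequently, for every position $j$, the value $\sigma(j)$ lies in the $i$-th block if and only if $\rho(j) = i_r$ for some $r$; equivalently, the underlying value of $\rho$ at position $j$ (i.e.\ $i$ when $\rho(j)=i_r$) equals $\lceil \sigma(j)/m \rceil$.

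First I would use this to prove $\rho_1 = \rho_2$. Writing $\sigma = \rho_1(\gamma_1, \ldots, \gamma_\ell) = \rho_2(\delta_1, \ldots, \delta_\ell)$, the observation above shows that the underlying value of $\rho_1$ at position $j$ and the underlying value of $\rho_2$ at position $j$ both equal $\lceil \sigma(j)/m \rceil$, hence agree for every $j$. Since a multi-permutation in $P_{\ell,m}$ is completely determined by its sequence of underlying values (the occurrence index $r$ attached to the $r$-th appearance of a value is forced by reading the positions from left to right), it follows that $\rho_1 = \rho_2$; denote this common multi-permutation by $\rho$.

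Next I would recover the inner permutations. Fix $i \in [\ell]$ and let $j_1 < j_2 < \cdots < j_m$ be the positions at which $\rho$ takes the value $i$, listed in increasing order; by the convention for the occurrence index we have $\rho(j_r) = i_r$. The defining relation for $\rho(\gamma_1,\ldots,\gamma_\ell)$ then yields $\gamma_i(r) = \sigma(j_r) = \delta_i(r)$ for every $r \in [m]$, so $\gamma_i = \delta_i$. Since $i$ was arbitrary, this completes the argument.

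The only delicate point is the claim that a multi-permutation is determined by its sequence of underlying values, so I would address it explicitly: the subscripts $i_r$ are mere bookkeeping, since given the plain sequence of values the subscript on each entry is \emph{defined} to be the number of earlier occurrences of that value plus one, and hence no information is lost by dropping them. Everything else is a direct unwinding of the definition of $\rho(\gamma_1, \gamma_2, \ldots, \gamma_\ell)$, so I expect no computational obstacle beyond setting up this notation cleanly.
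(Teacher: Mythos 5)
Your proof is correct. The paper actually omits its proof of this lemma (it is among those dropped for lack of space), but your argument---recovering the underlying value of $\rho$ at each position $j$ as $\lceil \sigma(j)/m \rceil$, noting that the occurrence subscripts are forced by left-to-right order, and then reading off $\gamma_i(r)=\sigma(j_r)=\delta_i(r)$ at the positions carrying value $i$---is the natural unwinding of the definition and exactly the argument the construction relies on.
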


For an even integer $m$, the set $D_{\ell,m}\subseteq P_{\ell,m}$ is defined as follows. A multi-permutation $\rho\in P_{\ell,m}$ belongs to $D_{\ell,m}$ if for every $j$, $1\leq j\leq \ell m/2$, $\rho(2j-1)=\rho(2j)$. For example, the multi-permutation $\rho=[1,1,2,2,2,2,3,3,1,1,3,3]$ belongs to $D_{3,4}$ since $\rho(1)=\rho(2)$, $\rho(3)=\rho(4)$, and so on. The size of $D_{\ell,m}$ is equal to the size of $P_{\ell,m/2}$. In the next construction we show how to construct two-neighbor constrained codes.
\begin{construction}\label{cons:2NC}
Let $n=\ell(k+1)$, where $k$ is an odd positive integer and $\ell$ is a positive integer. Let $\cC_{n,k}^{sym}\subseteq S_n$ be the code consists of all the permutations $\sigma\in S_n$ of the form $\sigma=\rho(\gamma_1,\gamma_2,\ldots,\gamma_{\ell})$, where $\rho\in D_{\ell,k+1}$ and $\gamma_i\in S([(i-1)(k+1)+1,i(k+1)])$ for $i\in [\ell]$. That is,\vspace{-1ex}

\vspace{-1.5ex}
\begin{small}
$$
 \cC_{n,k}^{sym} =\left\{  \rho(\gamma_1,\ldots,\gamma_{\ell}) :\hspace{-1ex} \begin{array}{c}  \rho\in D_{\ell,k+1}, i\in [\ell],\\ \textrm{$\gamma_i\in S([(i-1)(k+1)+1,i(k+1)])$}\end{array}\hspace{-1ex}\right\}.\vspace{-1ex}
$$
\end{small}
\vspace{-3ex}

\end{construction}
The correctness of Construction~\ref{cons:2NC} as well as the code cardinality are proved in the next lemma.
\begin{lemma}\label{lem:construction}
Let $n,k,\ell$ be as specified in Construction~\ref{cons:2NC}. Then, the code $\cC_{n,k}^{sym}$ is a two-neighbor $k$-constrained code and its cardinality is\vspace{-1ex}
$$|\cC_{n,k}^{sym}|=\frac{\left(\frac{n}{2}\right)!(k+1)!^{\ell}}{\left(\frac{k+1}{2}\right)!^{\ell}}.\vspace{-1ex}$$
\end{lemma}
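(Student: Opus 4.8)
The lemma has two parts: first, that every permutation produced by Construction~\ref{cons:2NC} satisfies the two-neighbor $k$-constraint, and second, the exact cardinality count. Let me think through both.

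For correctness, I need to show that any $\sigma = \rho(\gamma_1,\ldots,\gamma_\ell)$ with $\rho \in D_{\ell,k+1}$ satisfies the constraint. The key structural fact is that $\rho \in D_{\ell,k+1}$ means consecutive pairs $\rho(2j-1) = \rho(2j)$, i.e., positions come in matched blocks of two that carry the same symbol $i$. When $\rho(2j-1) = \rho(2j) = i$, both $\sigma(2j-1)$ and $\sigma(2j)$ are values of $\gamma_i$, hence both lie in the interval $[(i-1)(k+1)+1, i(k+1)]$, an interval of length $k+1$, so $|\sigma(2j-1) - \sigma(2j)| \le k$.

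For cardinality: $|D_{\ell,k+1}| = |P_{\ell,(k+1)/2}|$ (stated in excerpt), times $(k+1)!^\ell$ choices for the $\gamma_i$, and Lemma~\ref{lem:rhoInjection} ensures no overcounting.

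Let me verify the constraint more carefully. For each position $i$ with $2 \le i \le n-1$, I need either $|\sigma(i-1)-\sigma(i)| \le k$ or $|\sigma(i)-\sigma(i+1)| \le k$.

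The plan is to establish the two assertions of the lemma separately: first that every permutation in $\cC_{n,k}^{sym}$ meets the two-neighbor $k$-constraint, and then to count $|\cC_{n,k}^{sym}|$ exactly. The whole argument hinges on isolating the pairing structure that membership in $D_{\ell,k+1}$ imposes, after which everything reduces to bookkeeping.

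For the constraint, fix $\sigma=\rho(\gamma_1,\ldots,\gamma_\ell)$ with $\rho\in D_{\ell,k+1}$ and consider an arbitrary interior position $i$ with $2\leq i\leq n-1$. I would split on the parity of $i$. If $i=2j-1$ is odd, then $\rho\in D_{\ell,k+1}$ forces $\rho(2j-1)=\rho(2j)$, so $\rho(i)=\rho(i+1)$; writing this common symbol as $t$, both $\sigma(i)$ and $\sigma(i+1)$ are values of $\gamma_t$ and hence lie in the length-$(k+1)$ interval $[(t-1)(k+1)+1, t(k+1)]$, giving $|\sigma(i)-\sigma(i+1)|\leq k$. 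Symmetrically, if $i=2j$ is even, then $\rho(i-1)=\rho(i)$, so $\sigma(i-1)$ and $\sigma(i)$ lie in a common length-$(k+1)$ interval and $|\sigma(i-1)-\sigma(i)|\leq k$. One checks that the index bounds $2\leq i\leq n-1$ ensure the relevant neighbor position exists and that the pair $(2j-1,2j)$ is a legitimate block in both cases. Thus at every interior $i$ at least one of the two required inequalities holds, so $\sigma\in A_{n,k}$.

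For the cardinality, I would first invoke Lemma~\ref{lem:rhoInjection} to conclude that the assignment $(\rho,\gamma_1,\ldots,\gamma_\ell)\mapsto\rho(\gamma_1,\ldots,\gamma_\ell)$ is injective on its domain, so $|\cC_{n,k}^{sym}|$ equals the number of admissible tuples. Each $\gamma_i$ ranges over $S([(i-1)(k+1)+1, i(k+1)])$, contributing $(k+1)!$ choices, and the $\ell$ factors are chosen independently for a total of $(k+1)!^\ell$. For the number of choices of $\rho$, I would use the fact recorded just before the construction, namely $|D_{\ell,k+1}|=|P_{\ell,(k+1)/2}|$ (an integer index since $k$ odd makes $(k+1)/2$ an integer), together with the standard multinomial count $|P_{\ell,m}|=(\ell m)!/(m!)^{\ell}$. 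Setting $m=(k+1)/2$ and recalling $n=\ell(k+1)$ yields $|D_{\ell,k+1}|=(n/2)!/\big((k+1)/2\big)!^{\ell}$. Multiplying by $(k+1)!^\ell$ then gives the claimed value $\big(\tfrac{n}{2}\big)!(k+1)!^{\ell}/\big(\tfrac{k+1}{2}\big)!^{\ell}$.

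I do not anticipate a genuine obstacle here, since once the block structure of $D_{\ell,k+1}$ is exploited the constraint check is immediate and the count is purely combinatorial. The only points demanding mild care are the parity case analysis, where one must confirm that every index from $2$ through $n-1$ falls into exactly one of the two cases and that the referenced neighbor lies within a single $\gamma_t$-block, and the bookkeeping step of substituting $m=(k+1)/2$ and $n=\ell(k+1)$ correctly into the multinomial formula.
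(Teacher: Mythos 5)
Your proposal is correct and follows essentially the same route as the paper's proof: the parity case analysis exploiting the pairing $\rho(2j-1)=\rho(2j)$ in $D_{\ell,k+1}$ to place two adjacent symbols in a common interval of $k+1$ consecutive integers, followed by the injectivity of Lemma~\ref{lem:rhoInjection} and the count $|D_{\ell,k+1}|\cdot(k+1)!^{\ell}$. The only difference is cosmetic: you spell out the multinomial identity $|P_{\ell,m}|=(\ell m)!/(m!)^{\ell}$ that the paper leaves implicit when evaluating $|D_{\ell,k+1}|$.
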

\begin{proof}
Let $\sigma\in \cC_{n,k}^{sym}$. Then there exist $\rho\in D_{\ell,k+1}$, and $\gamma_1,\gamma_2,\ldots,\gamma_{\ell}$, where $\gamma_i\in S([(i-1)(k+1)+1,i(k+1)])$, for $i\in [\ell]$, such that $\sigma=\rho(\gamma_1,\gamma_2,\ldots,\gamma_{\ell})$. Let $2 < j\leq n-1$ be an odd integer and assume that $\rho(j)=i_r$ for some $i\in [\ell]$ and $r\in [k+1]$. By the definition of $D_{\ell,k+1}$, it follows that $\rho(j+1)=i_{r+1}$. Hence, $\sigma(j)=\gamma_i(r)\in [(i-1)(k+1)+1,i(k+1)]$ and similarly $\sigma(j+1)=\gamma_i(r+1)\in [(i-1)(k+1)+1,i(k+1)]$. It follows that $|\sigma(j)-\sigma(j+1)|\leq k$. The case of $j$ even is handled the same with respect to the symbol in position $j-1$. Thus, $\sigma$ satisfies the two-neighbor $k$-constraint.

For the computation of the cardinality of $\cC_{n,k}^{sym}$, note that by Lemma \ref{lem:rhoInjection} it follows that every choice of $\rho\in D_{\ell,k+1}$ and $\gamma_1,\gamma_2,\ldots,\gamma_{\ell}$, where $\gamma_i\in  S([(i-1)(k+1)+1,i(k+1)])$, for $i\in [\ell]$, generates a different codeword of the form $\rho(\gamma_1,\gamma_2,\ldots,\gamma_{\ell})$. Therefore,\vspace{-2ex}
$$ |\cC_{n,k}^{sym}|=|D_{\ell,k+1}|\cdot(k+1)!^{\ell}=\frac{(\frac{n}{2})!(k+1)!^{\ell}}{(\frac{k+1}{2})!^{\ell}}.\vspace{-4ex}$$
\end{proof}

Even though Construction~\ref{cons:2NC} provides two-neighbor constrained codes only to the case where $k$ is odd, it can be easily modified for the case that $k$ is even as well. In any event, we will not need this modification in order to calculate a lower bound on the capacity, which is stated in the next theorem.
\begin{theorem}\label{th:lower bound}
For all $0\leq \epsilon\leq 1$, $C(\epsilon)\geq \frac{1+\epsilon}{2}$.
\end{theorem}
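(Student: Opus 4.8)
The plan is to combine the subcode $\cC_{n,k}^{sym}\subseteq A_{n,k}$ from Construction~\ref{cons:2NC} with the cardinality formula of Lemma~\ref{lem:construction} and a Stirling estimate, exploiting the $\limsup$ to work only along a conveniently chosen subsequence of lengths $n$. Since $|A_{n,k}|\ge|\cC_{n,k}^{sym}|$, the bound of Lemma~\ref{lem:construction} immediately lower-bounds $\log|A_{n,k}|$, so the whole task reduces to showing that
$$\frac{\log|\cC_{n,k}^{sym}|}{\log n!}\longrightarrow\frac{1+\epsilon}{2}$$
along a suitable sequence in which $k=\lceil n^{\epsilon}\rceil$ may be assumed.

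First I would carry out the asymptotic estimate. Writing $\log m!=m\log m-m\log e+O(\log m)$ and substituting $n=\ell(k+1)$ into
$$\log|\cC_{n,k}^{sym}|=\log\left(\tfrac{n}{2}\right)!+\ell\log(k+1)!-\ell\log\left(\tfrac{k+1}{2}\right)!,$$
the $\log e$ contributions collapse to a single $-n\log e$ term (since $\ell(k+1)=n$ and $\ell(k+1)/2=n/2$), and after cancellation of the $\pm n/2$ summands the leading part simplifies to $\tfrac{n}{2}\log n+\tfrac{n}{2}\log(k+1)$. The residual error is $O(\ell\log(k+1))+O(\log n)=o(n\log n)$. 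Dividing by $\log n!=n\log n-n\log e+O(\log n)$ and then by $n\log n$, the ratio tends to $\tfrac12\bigl(1+\lim\tfrac{\log(k+1)}{\log n}\bigr)$, so it suffices to arrange $\tfrac{\log(k+1)}{\log n}\to\epsilon$.

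The main obstacle is reconciling the rigid requirements of the construction — $k$ odd and $(k+1)\mid n$ — with the capacity definition, which pins $k=\lceil n^{\epsilon}\rceil$ for \emph{every} $n$; for a prime $n$, say, there is no usable divisor near $n^{\epsilon}$. I would sidestep this using the $\limsup$: for $0<\epsilon<1$, let $k$ range over the odd integers, put $\ell=\lceil k^{1/\epsilon-1}\rceil$ and $n=\ell(k+1)$. Then $n\ge k^{1/\epsilon-1}(k+1)>k^{1/\epsilon}$, so $n^{\epsilon}>k$ and hence $\lceil n^{\epsilon}\rceil\ge k$; since $A_{n,k}$ is non-decreasing in $k$ (the constraint only weakens as $k$ grows) this gives $\cC_{n,k}^{sym}\subseteq A_{n,k}\subseteq A_{n,\lceil n^{\epsilon}\rceil}$. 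Moreover $n=\ell(k+1)=k^{1/\epsilon}(1+o(1))$, so $\log(k+1)/\log n\to\epsilon$, and the estimate above produces a subsequential limit $(1+\epsilon)/2$ for $\log|A_{n,\lceil n^{\epsilon}\rceil}|/\log n!$, whence $C(\epsilon)\ge(1+\epsilon)/2$.

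Finally I would dispatch the two endpoints. For $\epsilon=0$ one has $k=1$, which is fixed and odd, so taking $n$ even lets the construction apply directly, and $\log(k+1)/\log n=\log 2/\log n\to0$ reproduces the limit $1/2$. For $\epsilon=1$, rather than evaluate the degenerate value $k=\lceil n\rceil$, I would note that $C$ is non-decreasing in $\epsilon$ (since $\lceil n^{\epsilon_1}\rceil\le\lceil n^{\epsilon_2}\rceil$ for $\epsilon_1\le\epsilon_2$, again by monotonicity of $A_{n,k}$ in $k$), so $C(1)\ge\sup_{\epsilon<1}C(\epsilon)\ge\sup_{\epsilon<1}\frac{1+\epsilon}{2}=1$, completing the bound.
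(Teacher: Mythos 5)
Your proof is correct and follows the same route the paper intends: lower-bounding $|A_{n,k}|$ by the code $\cC_{n,k}^{sym}$ of Construction~\ref{cons:2NC}, invoking the cardinality formula of Lemma~\ref{lem:construction}, and estimating the ratio $\log|\cC_{n,k}^{sym}|/\log n!$ via Stirling's formula (the paper omits the proof of Theorem~\ref{th:lower bound}, but this is exactly the role of the preceding construction and lemma). Your extra care in reconciling the requirements that $k$ be odd and $(k+1)\mid n$ with the prescription $k=\lceil n^{\epsilon}\rceil$ --- using the monotonicity of $A_{n,k}$ in $k$ and a subsequence of lengths $n=\ell(k+1)$, which the $\limsup$ in the capacity definition permits --- correctly fills in a detail that the paper leaves implicit.
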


In order to derive an upper bound on the capacity $C(\epsilon)$ we show an upper bound on the size of $A_{n,k}$.
\begin{lemma}\label{lem:upper}
For all positive integers $n,k$ such that $k<n$,
$$|A_{n,k}|\leq 4^{n-1}k^{\frac{n}{2}}n^{\frac{n}{2}+1}.$$
\end{lemma}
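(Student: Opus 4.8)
The goal is to bound $|A_{n,k}|$, the number of permutations satisfying the two-neighbor $k$-constraint, by $4^{n-1}k^{n/2}n^{n/2+1}$. The plan is to count permutations in $A_{n,k}$ by building them up position-by-position and exploiting the constraint to limit the number of choices at ``most'' positions. The key observation is that the constraint forces, for every interior index $i$, at least one of the two adjacent differences $|\sigma(i-1)-\sigma(i)|$ or $|\sigma(i)-\sigma(i+1)|$ to be small (at most $k$). So if I partition the positions into consecutive pairs and ask how the symbols are placed, then within a large fraction of the positions the next symbol is confined to a window of size roughly $2k+1$ around a previously placed value, which contributes only a factor of $O(k)$ rather than $O(n)$ per such position.

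First I would set up a counting scheme that processes the positions $1,2,\ldots,n$ and, at each step, bounds the number of ways to choose $\sigma(i)$ given the earlier choices. The naive bound gives a factor of $n$ at every position (total $n!\le n^n$), which is far too weak. To extract the savings I would argue that the two-neighbor constraint guarantees that for each consecutive pair of interior positions, one of them must lie within distance $k$ of an adjacent symbol; this is where the factor $k^{n/2}$ comes from. Concretely, I would group the indices into about $n/2$ blocks and show that each block contributes a ``cheap'' position (costing a factor proportional to $k$, from a window of size $2k+1$, whence the $4$'s and the eventual $4^{n-1}$) together with an ``expensive'' position (costing at most $n$, whence the $n^{n/2}$). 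The leading $n$ and the small constants $4^{n-1}$ absorb the boundary positions $i=1$ and $i=n$, the off-by-one issues at the ends of blocks, and the fact that a window of size $2k+1\le 4k$ contributes at most $4k$ choices.

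The step I expect to be the main obstacle is making the ``one cheap position per pair'' argument fully rigorous, because the constraint is a disjunction: for index $i$ it is the symbol at $i$ that must be close to \emph{either} $\sigma(i-1)$ \emph{or} $\sigma(i+1)$, and which of the two holds can vary with $i$ and is not known in advance. To handle this I would branch on the pattern of which disjunct holds at each interior index — there are at most $2^{n-2}\le 4^{n-1}$ such patterns, and this is precisely the source of the $4^{n-1}$ factor. For a \emph{fixed} pattern, the constraint becomes a conjunction of concrete ``close to a specific neighbor'' conditions, and I can then order the positions so that each symbol subject to a closeness condition is chosen after the neighbor it must be close to, giving it at most $2k+1\le 4k$ admissible values. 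Summing (or rather multiplying and then summing over patterns) yields a bound of the form $2^{n-2}\cdot(\text{product of per-position choices})$, where at least $n/2$ of the positions are cheap.

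Finally I would assemble the factors: the $n/2$ cheap positions give $(2k+1)^{n/2}\le(4k)^{n/2}$ contributing $k^{n/2}$ and part of the powers of $2$; the remaining at most $n/2+1$ positions are bounded trivially by $n$ each, giving $n^{n/2+1}$; and the branching over neighbor-choice patterns contributes the $2^{n-2}$, which combines with the powers of $2$ from the windows to give $4^{n-1}$. Collecting these yields $|A_{n,k}|\le 4^{n-1}k^{n/2}n^{n/2+1}$, as claimed. The delicate bookkeeping is in ensuring the cheap/expensive split genuinely covers all $n$ positions with the right counts after fixing a pattern, and in verifying that the chosen processing order always places a constrained symbol after its designated neighbor.
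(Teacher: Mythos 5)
Your high-level strategy is the same as the paper's: branch over roughly $2^n$ ``smallness patterns'' (you: which disjunct holds at each interior index, $2^{n-2}$ patterns; the paper: which subset $I\subseteq[2,n]$ of adjacent differences lies in $[-k,k]$, fewer than $2^{n-1}$ subsets), then bound each branch by a product with about $n/2$ cheap factors of order $k$ and about $n/2$ expensive factors of order $n$. The sources of the three factors $4^{n-1}$, $k^{n/2}$, $n^{n/2+1}$ are identical. The mechanical difference is that the paper first maps each permutation injectively to its difference vector $(\sigma(1),\sigma(2)-\sigma(1),\ldots,\sigma(n)-\sigma(n-1))$, so each branch is counted as a trivial product over independent coordinates, with no processing order to worry about.

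That mechanical difference is exactly where your proposal has a genuine gap: the step you flag as delicate is, as stated, false. You cannot in general ``order the positions so that each symbol subject to a closeness condition is chosen after the neighbor it must be close to,'' because a pattern may designate positions $i$ and $i+1$ to each other (a $2$-cycle in the designation digraph); neither can then be placed after the other, and this mutual-pair pattern is the worst case, not a corner case. The repair: a mutual designation imposes only one condition, $|\sigma(i)-\sigma(i+1)|\leq k$, so one of the two positions must be downgraded to expensive. Concretely, form the undirected graph on $[n]$ whose edges are the designated adjacencies; it is a subgraph of the path $1\mbox{--}2\mbox{--}\cdots\mbox{--}n$, hence a forest, and a spanning-forest order makes one root per connected component expensive and all other vertices cheap. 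Since the $n-2$ designations yield at least $\lceil (n-2)/2\rceil$ distinct edges, the number of components is at most $\lfloor n/2\rfloor+1$, so you are guaranteed only $n/2-1$ cheap positions, not the $n/2$ you claim (the all-mutual-pairs pattern attains exactly $n/2-1$). The bound still closes, but only because of slack: $2^{n-2}\cdot (2k+1)^{n/2-1}\cdot n^{n/2+1}\leq 2^{n-2}\cdot 4^{n/2-1}k^{n/2}\cdot n^{n/2+1}\leq 4^{n-1}k^{n/2}n^{n/2+1}$, together with a separate trivial check (per-position choices never exceed $n$) for the regime $2k+1>n$ where the window bound is worse than the trivial one. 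Without the component/forest argument and this corrected accounting, your cheap/expensive split does not cover the positions with the counts you assert.
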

\begin{proof}
Let $\psi:A_{n,k}\rightarrow \mathbb{Z}^n$ be the following mapping. For a permutation $\sigma\in A_{n,k}$,
$\psi(\sigma)=\mathbf{x}=(x_1,x_2,\ldots,x_n)\in \mathbb{Z}^n,$ where $x_1=\sigma(1)$, and for each $i$, $2\leq i\leq n$, $x_i=\sigma(i)-\sigma(i-1)$. Clearly, $\psi$ is an injection and therefore, the size of the set $A_{n,k}$ is equal to the size of the image of $\psi$, $\psi(A_{n,k})= \{\psi(\sigma) \ : \ \sigma\in A_{n,k}\}$. We will show an upper bound on the size of $\psi(A_{n,k})$.

Let $\mathbf{x}=\psi(\sigma)$ for some $\sigma\in A_{n,k}$. For any two consecutive positions $j,j+1$, $2\leq j\leq n-1$, either $|\sigma(j)-\sigma(j-1)|\leq k$ or $|\sigma(j+1)-\sigma(j)|\leq k$. Therefore, at least $\left\lfloor \frac{n-1}{2}\right\rfloor$ of the $n-1$ elements $x_2,x_3,\ldots,x_n$ are in the range $[-k,k]\setminus\{0\}$. Let $I\subseteq [2,n]$ be a set with at least $\left\lfloor \frac{n-1}{2}\right\rfloor$ elements and let $D_I$ be the set of all vectors $\mathbf{x}\in \psi(A_{n,k})$ for which $x_i\in [-k,k]\setminus\{0\}$, for every $i\in I$ and $x_j\in [-n,n]\setminus [-k,k]$, for every $j\in [2,n]\setminus I$. Then,
\begin{equation}\label{eq:1}
|\psi(A_{n,k})|\leq \sum_{I\subseteq[2,n],~|I|\geq \left\lfloor \frac{n-1}{2}\right\rfloor}|D_I|.
\end{equation}

For each $i\in I$ there are $2k$ choices for $x_i$ and for each $j\in [2,n]\setminus I$ there are at most $2(n-k)<2n$ choices for $x_j$. Finally, there are $n$ choices for $x_1$. Therefore,
$$|D_I|\leq n\cdot (2k)^{\left\lfloor \frac{n-1}{2}\right\rfloor}\cdot (2n)^{\left\lceil \frac{n-1}{2}\right\rceil}= 2^{n-1}k^{\left\lfloor \frac{n-1}{2}\right\rfloor}n^{\left\lceil\frac{n-1}{2}\right\rceil+1}.$$

Since the number of choices of $I$ is less than $2^{n-1}$, according to~(\ref{eq:1}), the following upper bound on the cardinality of $A_{n,k}$ and $\psi(A_{n,k})$ is derived\vspace{-1ex}
\begin{align*}
& |A_{n,k}| = |\psi(A_{n,k})| \leq 2^{n-1}\cdot 2^{n-1}k^{\left\lfloor \frac{n-1}{2}\right\rfloor}n^{\left\lceil\frac{n-1}{2}\right\rceil+1}& \\
& \leq  4^{n-1}k^{\frac{n}{2}}n^{\frac{n}{2}+1}. &
\end{align*}\vspace{-2ex}
\end{proof}

As a result of the last lemma we derive the following.
\begin{theorem}\label{th:upper bound}
For all $0\leq \epsilon\leq 1$, $C(\epsilon)\leq \frac{1+\epsilon}{2}$.
\end{theorem}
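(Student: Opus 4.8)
The plan is to deduce the bound directly from the cardinality estimate in \Lref{lem:upper} by a routine asymptotic computation, the only real content being to track which terms dominate after dividing by $\log n!$. First I would take logarithms in the inequality $|A_{n,k}|\leq 4^{n-1}k^{n/2}n^{n/2+1}$ to obtain
$$\log|A_{n,k}| \leq (n-1)\log 4 + \tfrac{n}{2}\log k + \left(\tfrac{n}{2}+1\right)\log n.$$
Since $\log 4 = 2$, the first summand is $O(n)$.

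Next I would substitute the constraint value $k=\lceil n^{\epsilon}\rceil$. For $n\geq 2$ one has $n^{\epsilon}\geq 1$, so $k\leq n^{\epsilon}+1\leq 2n^{\epsilon}$ and hence $\log k\leq 1+\epsilon\log n$. Plugging this in gives $\tfrac{n}{2}\log k\leq \tfrac{\epsilon}{2}n\log n + O(n)$, while $\left(\tfrac{n}{2}+1\right)\log n = \tfrac{1}{2}n\log n + O(\log n)$. Collecting the two $n\log n$ contributions yields
$$\log|A_{n,k}| \leq \frac{1+\epsilon}{2}\,n\log n + o(n\log n).$$

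Finally I would invoke Stirling's approximation in the form $\log n! = n\log n - (\log e)\,n + O(\log n)$, so that $\log n! = n\log n\,(1+o(1))$. Dividing the previous display by $\log n!$ and letting $n\to\infty$, every lower-order term is negligible and the ratio is bounded above by a quantity tending to $\frac{1+\epsilon}{2}$; taking the limit superior gives $C(\epsilon)\leq \frac{1+\epsilon}{2}$, as desired.

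There is no genuine obstacle here: the argument is purely asymptotic bookkeeping. The only point requiring a little care is the treatment of the ceiling in $k=\lceil n^{\epsilon}\rceil$ — one must bound $\log k$ by $\epsilon\log n+O(1)$ rather than exactly $\epsilon\log n$ — together with the observation that the additive $O(n)$ and $O(\log n)$ terms are negligible against $\log n!\sim n\log n$. Combined with \Tref{th:lower bound}, this pins down $C(\epsilon)=\frac{1+\epsilon}{2}$ and, in particular, shows that the limit defining the capacity actually exists.
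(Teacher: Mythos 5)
Your proof is correct and follows exactly the route the paper intends: the paper omits the proof for space but states the theorem as an immediate consequence of \Lref{lem:upper}, and your argument is precisely that derivation --- take logarithms of $|A_{n,k}|\leq 4^{n-1}k^{n/2}n^{n/2+1}$, substitute $k=\lceil n^{\epsilon}\rceil$, and divide by $\log n!\sim n\log n$. The handling of the ceiling via $\log k\leq \epsilon\log n+O(1)$ and the absorption of the $O(n)$ and $O(\log n)$ terms are exactly the bookkeeping needed, so there is nothing to add.
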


The following Corollary, which is an immediate result of Theorems~\ref{th:lower bound} and~\ref{th:upper bound}, summarizes the section's discussion.
\begin{cor}\label{cor:2NC}
For all $0\leq \epsilon\leq 1$, $ C(\epsilon)=\frac{1+\epsilon}{2}$.
\end{cor}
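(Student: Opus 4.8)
The final statement to prove is Corollary~\ref{cor:2NC}, which asserts that $C(\epsilon)=\frac{1+\epsilon}{2}$ for all $0\leq\epsilon\leq 1$. Since the corollary is explicitly labeled an immediate consequence of Theorems~\ref{th:lower bound} and~\ref{th:upper bound}, the plan is simply to combine the two matching bounds: Theorem~\ref{th:lower bound} gives $C(\epsilon)\geq\frac{1+\epsilon}{2}$ and Theorem~\ref{th:upper bound} gives $C(\epsilon)\leq\frac{1+\epsilon}{2}$, and together these force equality. The only genuine work lies in deducing the two theorems from the combinatorial estimates already in hand (Lemma~\ref{lem:construction} for the lower bound and Lemma~\ref{lem:upper} for the upper bound), so I would focus the proof there.

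For the lower bound (Theorem~\ref{th:lower bound}), I would start from the cardinality $|\cC_{n,k}^{sym}|=\frac{(n/2)!\,(k+1)!^{\ell}}{((k+1)/2)!^{\ell}}$ established in Lemma~\ref{lem:construction}, valid when $k$ is odd and $n=\ell(k+1)$. Since $\cC_{n,k}^{sym}\subseteq A_{n,k}$, we have $|A_{n,k}|\geq|\cC_{n,k}^{sym}|$, so it suffices to show $\liminf_{n\to\infty}\frac{\log|\cC_{n,k}^{sym}|}{\log n!}\geq\frac{1+\epsilon}{2}$ with $k=\lceil n^{\epsilon}\rceil$. The plan is to take logarithms and apply Stirling's approximation, $\log m!=m\log m - m\log e + O(\log m)$. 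Writing $\ell=n/(k+1)$, the dominant contributions come from $\log(n/2)!\sim\frac{n}{2}\log n$ and from $\log\big((k+1)!^{\ell}/((k+1)/2)!^{\ell}\big)$, which contributes roughly $\ell\cdot\frac{k+1}{2}\log(k+1)\sim\frac{n}{2}\log k\sim\frac{\epsilon n}{2}\log n$. Adding these and dividing by $\log n!\sim n\log n$ yields $\frac{1}{2}+\frac{\epsilon}{2}=\frac{1+\epsilon}{2}$ in the limit. I would also note that restricting to odd $k$ and to $n$ divisible by $k+1$ is harmless for a $\limsup$/capacity computation, since one may pass to a subsequence or adjust $k$ and $n$ by additive constants without changing the leading asymptotics.

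For the upper bound (Theorem~\ref{th:upper bound}), I would invoke Lemma~\ref{lem:upper}, namely $|A_{n,k}|\leq 4^{n-1}k^{n/2}n^{n/2+1}$. Taking logarithms gives $\log|A_{n,k}|\leq (n-1)\log 4+\frac{n}{2}\log k+\big(\frac{n}{2}+1\big)\log n$. With $k=\lceil n^{\epsilon}\rceil$ we have $\log k\sim\epsilon\log n$, so the right-hand side is $\frac{\epsilon n}{2}\log n+\frac{n}{2}\log n+O(n)=\frac{(1+\epsilon)n}{2}\log n+O(n)$. Dividing by $\log n!\sim n\log n$ and taking $\limsup$ produces $\frac{1+\epsilon}{2}$, since the $O(n)$ and linear-in-$n$ terms are negligible against $n\log n$. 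The two bounds then coincide, establishing the corollary and incidentally confirming the earlier remark that the limit defining $C(\epsilon)$ actually exists.

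I do not anticipate a serious obstacle, as the heavy combinatorial lifting is already done in the two lemmas; the remaining task is a careful asymptotic bookkeeping via Stirling. The one point demanding mild care is the lower-bound argument's restriction to odd $k$ and to lengths $n$ of the form $\ell(k+1)$: I would make explicit that for general $n$ and $k=\lceil n^{\epsilon}\rceil$ one can replace $k$ by the nearest odd value and $n$ by the nearest multiple of $k+1$, absorbing the discrepancy into lower-order terms, so that the capacity bound is unaffected. Beyond that, verifying that all subleading terms are $o(n\log n)$ is routine.
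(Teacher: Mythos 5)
Your proposal matches the paper's route exactly: the corollary is obtained by combining Theorem~\ref{th:lower bound} (whose proof the paper omits but clearly intends to follow from Lemma~\ref{lem:construction} via Stirling, just as you do) with Theorem~\ref{th:upper bound} (which the paper derives from Lemma~\ref{lem:upper}, again as you do). Your asymptotic bookkeeping and your handling of the parity/divisibility restrictions on $k$ and $n$ are sound, so the proof is correct and essentially identical in approach to the paper's.
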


\section{The Asymmetric Two-Neighbor Constraint}\label{sec:AsymmConst}

In this section we find the capacity of the asymmetric two-neighbor constraint. Our main result states that for all $0\leq \epsilon \leq 1$, $\widetilde{C}(\epsilon)=1$. Since the capacity is at most 1, and the capacity is nondecreasing when $\epsilon$ increases, we will need to show that $\widetilde{C}(0)=1$. This will be done by a construction of an asymmetric two-neighbor $1$-constrained code that confirms this capacity result.

For a set $I$, let $I^{\nearrow}$, respectively $ I^{\searrow}$, denote the ordering of all elements in $I$ according to their increasing, respectively decreasing, order. For the construction of an asymmetric two-neighbor $1$-constrained code we will need the code $\cC_{r',1}^{sym}$, where $r'$ is even, from Construction \ref{cons:2NC}. Recall that a permutation $\pi\in \cC_{r',1}^{sym}$ is of the form\vspace{-1ex}
$$
\pi=\rho(\gamma_1,\gamma_2,\ldots,\gamma_{\frac{r'}{2}}),\vspace{-1ex}
$$
where $\rho(2i-1)=\rho(2i)$ and
$\gamma_i\in S([2i-1,2i])$, for all $1\leq i\leq \frac{r'}{2}$.
In other words, for every $j$, $1\leq j\leq \frac{r'}{2}$, there exists $1\leq i\leq \frac{r'}{2}$ such that $\{\pi(2j-1),\pi(2j)\}=\{2i-1,2i\}$.

\begin{construction}
\label{con:ATNC}
Let $r$ be an integer, $1\leq r\leq \frac{n}{2}$. If $r$ is even,
let the code $\cC_r\subset S_n$ be defined as follows. A permutation $\sigma\in S_n$ belongs to $\cC_r$ if there exists a partition of the set $[r-1,n]$ into $r$ nonempty sets $I_1,I_2,\ldots,I_r$, and a permutation $\pi\in \cC_{r-2,1}^{sym}$ such that

\vspace{-2ex}
\begin{small}
$$
\hspace{-0.1ex}\sigma\hspace{-0.6ex}=\hspace{-0.3ex}[I_1^{\nearrow}\hspace{-0.3ex},\hspace{-0.3ex}I_2^{\searrow}\hspace{-0.3ex},\hspace{-0.3ex}\pi(1),\hspace{-0.3ex}\pi(2), \hspace{-0.3ex}I_3^{\nearrow}\hspace{-0.3ex},\hspace{-0.3ex}I_4^{\searrow}\hspace{-0.3ex},\hspace{-0.3ex}\ldots\hspace{-0.3ex},\hspace{-0.3ex}\pi(r-3), \hspace{-0.3ex}\pi(r-2),I_{r-1}^\nearrow\hspace{-0.3ex},I_r^\searrow\hspace{-0.3ex}].\vspace{-2.5ex}
$$
\end{small}

For an odd $r$, let the code $\cC_r\subset S_n$ defined in a similar way. A permutation $\sigma\in S_n$ belongs to $\cC_r$ if there exists a partition of the set $[r,n]$ into $r$ nonempty sets $I_1,I_2,\ldots,I_r$, and a permutation $\pi\in \cC_{r-1,1}^{sym}$ such that

\vspace{-2ex}
\begin{small}
$$\sigma=[I_1^\nearrow,I_2^\searrow,\pi(1),\pi(2),I_3^\nearrow,I_4^\searrow,\ldots,\pi(r-2),\pi(r-1),I_{r}^\nearrow].\vspace{-2.5ex}$$
\end{small}

Finally, let $\cC_n^{asym}\subset S_n$ be the code\vspace{-2ex}
$$\cC_n^{asym}=\bigcup_{r=1}^{\lfloor n/2\rfloor}\cC_r.\vspace{-1ex}$$
\end{construction}

\begin{example}
For $n=14$ and $r=5$, let $I_1=\{5,8,10\}$, $I_2=\{6,12\}$, $I_3=\{7,15\}$, $I_4=\{9,13\}$, $I_5=\{11,14\}$ be a partition of $[5,14]$ into $5$ nonempty sets and let $\pi=[4,3,1,2]$. Note, that $\pi=\rho(\gamma_1,\gamma_2)$ where $\rho=[2,2,1,1]$, $\gamma_1=[1,2]\in S([1,2])$, and $\gamma_2=[4,3]\in S([3,4])$, hence, $\pi$ is a codeword in $\cC_{4,1}^{sym}$. The permutation $\sigma\in \cC_5$ of the form \begin{small}$\sigma=[I_1^{\nearrow},I_2^{\searrow},\pi(1),\pi(2),I_3^{\nearrow},I_4^{\searrow},\pi(3),\pi(4),I_5^{\nearrow}]$
\end{small} is $\sigma=[5,8,10,12,6,4,3,7,15,13,9,1,2,11,14]$.
Note, that $\sigma$ can also be obtained from other partitions such as $\tilde{I}_1=\{5,8,10,12\}$, $\tilde{I}_2=\{6\}$, and $\tilde{I}_i=I_i$, for all $3\leq i\leq 5$.
\end{example}

A position $i$, $2\leq i\leq n-1$, is called a \textbf{valley} in a permutation $\sigma\in S_n$ if
$\sigma(i-1)>\sigma(i)$ and $\sigma(i)<\sigma(i+1)$. For example, in the permutation $\sigma=[4,7,5,6,1,2,3]$, the third  and fifth positions are valleys. The next lemma will be used in proving the correctness of the construction, which will be proved next.
\begin{lemma}\label{lem:CrCupCr+1}
Let $m$ be an integer, $0\leq m\leq \frac{n-2}{4}$. Then, every permutation $\sigma\in\cC_{2m+1}\cup\cC_{2m+2}$ has exactly $m$ valleys.
\end{lemma}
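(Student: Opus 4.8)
The plan is to exploit the strict separation between the ``small'' values carried by $\pi$ and the ``large'' values distributed among the sets $I_j$, and then to localize every valley to a $\pi$-pair. First I would record the key value-range fact, valid in both parities: every entry coming from $\pi$ is strictly smaller than every entry coming from the $I_j$'s. Indeed, for even $r$ the sets $I_1,\dots,I_r$ partition $[r-1,n]$ while $\pi\in\cC_{r-2,1}^{sym}\subseteq S_{r-2}$ uses only the values in $[1,r-2]$; for odd $r$ the $I_j$ partition $[r,n]$ while $\pi\in\cC_{r-1,1}^{sym}\subseteq S_{r-1}$ uses only $[1,r-1]$. Hence any $\pi$-entry is $<r-1$ (resp. $<r$) and any $I$-entry is $\ge r-1$ (resp. $\ge r$). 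I would also recall from the structure of $\cC_{\cdot,1}^{sym}$ that the $\pi$-entries come in consecutive pairs $\{\pi(2j-1),\pi(2j)\}=\{2i-1,2i\}$, so within each pair the two values differ by exactly $1$ and occupy adjacent positions of $\sigma$.

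Next I would argue that no valley occurs inside, or at the boundary of, the ``mountain'' blocks $I_{2j-1}^{\nearrow},I_{2j}^{\searrow}$ (and the trailing $I_r^{\nearrow}$ in the odd case). Inside an increasing (resp. decreasing) run there is no local minimum; at the peak junction $I_{2j-1}^{\nearrow}\to I_{2j}^{\searrow}$ the two middle entries are $\max(I_{2j-1})$ and $\max(I_{2j})$, each of which has a strictly smaller neighbour on its outer side, so neither is a local minimum. The only positions requiring care are the two ``feet'' of each mountain, $\min(I_{2j-1})$ at the start and $\min(I_{2j})$ at the end: by the value-range fact a $\pi$-entry sitting immediately before a foot is smaller than it (so the foot lies on an ascending slope), and a $\pi$-entry immediately after is also smaller (so the foot lies on a descending slope); in either case the foot is not a local minimum. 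Single-element blocks $I_j$ must be checked separately, but the same inequalities turn such a position into a local maximum rather than a minimum.

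I would then show each $\pi$-pair contributes exactly one valley. Such a pair occupies adjacent positions $p,p+1$ with $\sigma(p-1)=\min(I_{2j})$ and $\sigma(p+2)=\min(I_{2j+1})$ (or $\min(I_r)$ for the final pair in the odd case), both strictly larger than the two $\pi$-values; thus there is a descent into position $p$ and an ascent out of position $p+1$, and in particular $2\le p\le n-1$ so the resulting local minimum is a genuine valley. Since the pair equals $\{2i-1,2i\}$ in one of its two orders, the triple $(\sigma(p-1),\sigma(p),\sigma(p+1),\sigma(p+2))$ has its unique local minimum at $p$ if $\pi(2j-1)<\pi(2j)$ and at $p+1$ otherwise---exactly one valley either way.

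Finally I would count the pairs: for $r=2m+2$ there are $(r-2)/2=m$ $\pi$-pairs, and for $r=2m+1$ there are $(r-1)/2=m$ $\pi$-pairs, so in both cases the previous two steps yield exactly $m$ valleys, covering $\cC_{2m+1}\cup\cC_{2m+2}$. The degenerate cases $r=1,2$ (where $\pi$ is empty and $\sigma$ is a single increasing run or a single mountain) give $m=0$ valleys, matching the formula. The main obstacle I anticipate is the bookkeeping at block boundaries---in particular excluding spurious valleys at mountain feet when some $I_j$ are singletons---but the strict inequality between $\pi$-values and $I$-values collapses every such case to a one-line comparison.
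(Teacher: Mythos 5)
Your proof is correct and complete: the value separation between $\pi$-entries and $I$-entries, the run-by-run exclusion of valleys inside and at the boundaries of the monotone blocks (including singleton blocks), and the observation that each $\pi$-pair contributes exactly one valley (at its smaller element) together give exactly $m$ valleys in both $\cC_{2m+1}$ and $\cC_{2m+2}$. The paper omits its own proof of this lemma for lack of space, but your argument is evidently the intended one, since the paper's proof of Lemma~\ref{lem:A2NC1} relies on precisely this characterization --- that the valleys occur exactly at the positions holding the odd (i.e., smaller) element of each $\pi$-pair.
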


\begin{lemma}
\label{lem:A2NC1}
For all $n\geq 1$, the code $\cC_n^{asym}$ is an asymmetric two-neighbor $1$-constrained code.
\end{lemma}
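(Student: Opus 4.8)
The plan is to reduce the asymmetric two-neighbor $1$-constraint to a statement about valleys, and then to locate all valleys of a codeword using Lemma~\ref{lem:CrCupCr+1}. First I observe that, since $\sigma$ is a permutation (so there are no ties), an interior position $i$ violates the constraint precisely when $\sigma(i-1)-\sigma(i)\geq 2$ and $\sigma(i+1)-\sigma(i)\geq 2$; in particular a violating position must be a valley, and a valley $i$ is \emph{safe} (satisfies the constraint) as soon as one of its two neighbors exceeds $\sigma(i)$ by exactly $1$. Any non-valley interior position automatically satisfies the constraint, because one of its neighbors is then smaller than $\sigma(i)$ and the corresponding difference is negative. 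Thus it suffices to show that every valley of $\sigma$ has a neighbor whose value is exactly one larger.

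Next I analyze the structure of a codeword; it is enough to treat a single $\cC_r$, since $\cC_n^{asym}$ is their union. Writing $\sigma\in\cC_r$ in the block form of Construction~\ref{con:ATNC}, I use two features of the decomposition: every value coming from the sets $I_j$ lies in $[r-1,n]$ (resp. $[r,n]$) and hence \emph{strictly exceeds} every value of $\pi$, which lies in $[1,r-2]$ (resp. $[1,r-1]$); and by the definition of $\cC_{r',1}^{sym}$ each consecutive pair $\pi(2t-1),\pi(2t)$ equals $\{2i-1,2i\}$ for some $i$, so its two entries are consecutive integers. Moreover each such $\pi$-pair is always preceded by a decreasing block $I_{\mathrm{even}}^{\searrow}$ (ending at its minimum, an $I$-value) and followed by an increasing block (starting at its minimum, an $I$-value). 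Consequently the smaller entry of the pair has both neighbors strictly larger—its partner, larger by exactly $1$, on one side, and an $I$-value on the other—so it is a valley and is safe; the larger entry has the smaller partner as a neighbor and is therefore not a valley.

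This exhibits exactly one valley per $\pi$-pair. For $r\in\{2m+1,2m+2\}$ the number of $\pi$-pairs is exactly $m$, so these are $m$ distinct valley positions. By Lemma~\ref{lem:CrCupCr+1}, $\sigma$ has exactly $m$ valleys, whence the valleys produced by the $\pi$-pairs are \emph{all} the valleys of $\sigma$. Every valley is therefore safe, every non-valley interior position is safe by the first paragraph, and so $\sigma$ satisfies the asymmetric two-neighbor $1$-constraint. Taking the union over $r$ shows that $\cC_n^{asym}$ is an asymmetric two-neighbor $1$-constrained code.

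The step I expect to require the most care is the valley bookkeeping at the block boundaries: I must be sure that stray valleys cannot appear at the junctions between the monotone $I$-blocks, including the degenerate case of singleton blocks where $I^{\nearrow}=I^{\searrow}$. The clean way around an exhaustive junction-by-junction check is exactly to lean on the exact count $m$ furnished by Lemma~\ref{lem:CrCupCr+1}: once $m$ valleys have been produced inside the $\pi$-pairs, no junction can hide an additional one, so the potentially tedious analysis of all $I$-block transitions is bypassed entirely.
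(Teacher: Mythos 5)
Your proof is correct and follows essentially the same route as the paper: both reduce the constraint to checking valleys, identify the valleys as the smaller (odd-valued) entries of the $\pi$-pairs—each safe because its partner exceeds it by exactly $1$—and invoke the exact valley count of Lemma~\ref{lem:CrCupCr+1} to rule out any other valleys. Your write-up merely makes explicit the structural bookkeeping (the $I$-values dominating the $\pi$-values, and the non-valley positions being automatically safe) that the paper's terse proof leaves implicit.
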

\begin{proof}
Let $\sigma\in \cC_n^{asym}$ and let $m$ be the number of valleys in $\sigma$. By Lemma \ref{lem:CrCupCr+1} it follows that $\sigma\in \cC_{2m+1}\cup\cC_{2m+2}$ and the valleys of $\sigma$ are the positions $i$ such that $\sigma(i)=\pi(j)$, for some $1\leq j\leq 2m$, and $\pi(j)$ is odd. It follows that either $\sigma(i-1)=\sigma(i)+1$ or $\sigma(i+1)=\sigma(i)+1$. Then the valleys in $\sigma$ do not violate the asymmetric two-neighbor 1-constraint and therefore $\sigma$ satisfies the asymmetric two-neighbor 1-constraint.
\end{proof}

Next, we will analyze a lower bound on the cardinalities of the codes from Construction~\ref{con:ATNC}. First, we use the following observation.
\begin{lemma}
\label{lem:A2NC2}
For all $n\geq 1$, let $\sigma \in \cC_n^{asym}$ and let $m$ be the number of valleys in $\sigma$. Then there exist at most $2^{m+1}$ different ways to obtain $\sigma$ as described in Construction~\ref{con:ATNC}.
\end{lemma}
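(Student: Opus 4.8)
The plan is to show that a decomposition of $\sigma$ in the sense of Construction~\ref{con:ATNC} is almost entirely forced, the only remaining freedom being a bounded number of binary choices, one per block of large symbols.

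First I would fix $\sigma\in\cC_n^{asym}$ with $m$ valleys. By Lemma~\ref{lem:CrCupCr+1}, if $\sigma\in\cC_r$ then $r\in\{2m+1,2m+2\}$, so only these two values of $r$ can occur. In either case $\pi\in\cC_{2m,1}^{sym}$ is a permutation of $\{1,\dots,2m\}$, while the sets $I_1,\dots,I_r$ partition $\{2m+1,\dots,n\}$. Hence the symbols $1,\dots,2m$ occupy exactly the positions holding the interleaved symbols $\pi(1),\dots,\pi(2m)$, and reading them off in order recovers $\pi$. Thus $\pi$ is uniquely determined by $\sigma$, and only $r$ together with the partition remains to be chosen.

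Next I would isolate the block structure. In the layout of Construction~\ref{con:ATNC} the symbols of $\pi$ appear in $m$ adjacent pairs, and consecutive pairs are separated by nonempty blocks of symbols from $\{2m+1,\dots,n\}$ (the $I$ sets are nonempty). Deleting the $m$ pairs therefore cuts $\sigma$ into $m+1$ nonempty gaps $G_0,G_1,\dots,G_m$, which depend on $\sigma$ alone since they are the maximal runs of symbols exceeding $2m$. For $0\le j\le m-1$ each gap equals $I_{2j+1}^\nearrow I_{2j+2}^\searrow$ in both parities, so recovering $I_{2j+1},I_{2j+2}$ amounts to choosing where to cut the unimodal word $G_j$ into a nonempty increasing prefix and a nonempty decreasing suffix; such a word admits at most two such cuts (at and just before its peak). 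The last gap $G_m$ is the only place where the parities differ: for $r=2m+1$ it must equal $I_{2m+1}^\nearrow$ and be purely increasing, whereas for $r=2m+2$ it must equal $I_{2m+1}^\nearrow I_{2m+2}^\searrow$.

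Finally I would assemble the count. Since the cuts of $G_0,\dots,G_{m-1}$ do not depend on the parity of $r$, the number of decompositions factors as $\big(\prod_{j=0}^{m-1}s_j\big)\,(s_m^{\text{odd}}+s_m^{\text{even}})$, where $s_j\le 2$ and $s_m^{\text{odd}},s_m^{\text{even}}$ count the admissible cuts of $G_m$ in the two parities. The crux, and the main obstacle to avoiding over-counting, is that $G_m$ contributes at most $2$ \emph{across both parities}: if $G_m$ is purely increasing then $s_m^{\text{odd}}=1$ and $s_m^{\text{even}}\le 1$ (the decreasing suffix can only be its final symbol), while if $G_m$ has a descent then $s_m^{\text{odd}}=0$ and $s_m^{\text{even}}\le 2$; in either case $s_m^{\text{odd}}+s_m^{\text{even}}\le 2$. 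Therefore the total number of decompositions is at most $2^{m}\cdot 2=2^{m+1}$. The delicate point is exactly this last step, since a naive sum over the two parities would give $2^m+2^{m+1}$; one must observe that forcing $G_m$ to be increasing (as the odd case requires) simultaneously collapses its even-case cuts to a single one.
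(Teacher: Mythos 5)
Your proof is correct. The paper itself omits the proof of this lemma (it is among the proofs dropped for lack of space), so there is no printed argument to compare against; yours is the natural one: $\pi$ and the maximal runs of symbols larger than $2m$ are uniquely determined by $\sigma$, each of the first $m$ runs admits at most two cuts into an increasing prefix and a decreasing suffix, and the last run contributes at most two decompositions \emph{summed over} the two cases $r=2m+1$ and $r=2m+2$. The point you flag as delicate is indeed the crux of getting the clean bound: a naive count over the two parities gives only $2^m+2^{m+1}$, and your observation that an increasing final run forces the even-case cut to be unique is exactly what reduces this to $2^{m+1}$.
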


For two positive integers $\ell,r$, where $r\leq \ell$, the number of partitions of $\ell$ elements into $r$ nonempty sets is denoted by $S(\ell,r)$ and is known as the Stirling number of the second kind.
\begin{lemma}\label{lem:A2NCcard}
For all $n\geq 1$, the cardinality of the code $\cC_n^{asym}$ satisfies\vspace{-1.5ex}
$$ |\cC_n^{asym}|\geq\sum_{r=1}^{\lfloor\frac{n}{2}\rfloor}\frac{1}{2}r!S\Big(n-2\Big\lfloor \frac{r-1}{2}\Big\rfloor,r\Big)\Big\lfloor\frac{r-1}{2}\Big\rfloor!.\vspace{-1ex}$$
\end{lemma}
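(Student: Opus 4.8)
The plan is to count, for each $r$, the number of \emph{representations} $(I_1,\dots,I_r;\pi)$ admitted by Construction~\ref{con:ATNC}, and then to pass from representations to distinct permutations by dividing by the maximum number of representations that can yield the same $\sigma$, which is controlled by \Lref{lem:A2NC2}. The structural fact that makes this clean is \Lref{lem:CrCupCr+1}: a permutation has exactly $m$ valleys iff it lies in $\cC_{2m+1}\cup\cC_{2m+2}$. Consequently the sets $\cC_{2m+1}\cup\cC_{2m+2}$, indexed by $m$, are pairwise disjoint and their union is $\cC_n^{asym}$, so that $|\cC_n^{asym}|=\sum_m|\cC_{2m+1}\cup\cC_{2m+2}|$. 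I would bound each block from below and then reorganise into a single sum over $r$.

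First I would count representations. For a given $r$, write $m=\lfloor (r-1)/2\rfloor$. In both the even and the odd case of Construction~\ref{con:ATNC} the set that is partitioned, $[r-1,n]$ respectively $[r,n]$, has exactly $n-2m$ elements; hence the number of ways to choose the ordered tuple of $r$ nonempty sets $I_1,\dots,I_r$ equals the number of surjections from an $(n-2m)$-set onto $[r]$, namely $r!\,S(n-2m,r)$. The inner permutation $\pi$ ranges over $\cC_{r-1,1}^{sym}$ (odd $r$) or $\cC_{r-2,1}^{sym}$ (even $r$); in either case this is $\cC_{2m,1}^{sym}$, whose size, by \Lref{lem:construction} with $k=1$, is $m!\,2^m$. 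Thus the number of representations used by $\cC_r$ is exactly $r!\,S(n-2m,r)\cdot m!\,2^m$.

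Next I would pass to distinct permutations. Every representation appearing in $\cC_{2m+1}$ or $\cC_{2m+2}$ produces a permutation with exactly $m$ valleys (\Lref{lem:CrCupCr+1}), and by \Lref{lem:A2NC2} each such permutation arises from at most $2^{m+1}$ representations. Therefore
\begin{align*}
|\cC_{2m+1}\cup\cC_{2m+2}|
&\geq \frac{1}{2^{m+1}}\Big[(2m{+}1)!\,S(n{-}2m,2m{+}1)\,m!\,2^m \\
&\qquad\quad +(2m{+}2)!\,S(n{-}2m,2m{+}2)\,m!\,2^m\Big].
\end{align*}
The factor $2^m/2^{m+1}=1/2$ is exactly what produces the $\tfrac12$ in the claimed formula. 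Splitting the two summands and reindexing by $r\in\{2m+1,2m+2\}$ (so that $m=\lfloor(r-1)/2\rfloor$ and $n-2m=n-2\lfloor(r-1)/2\rfloor$), each term becomes $\tfrac12\,r!\,S(n-2\lfloor(r-1)/2\rfloor,r)\,\lfloor(r-1)/2\rfloor!$, and summing over all blocks yields the stated lower bound. When $\lfloor n/2\rfloor$ is odd the topmost block contributes only its $\cC_{2m+1}$ part, which merely drops a nonnegative term and keeps the inequality valid.

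The main point requiring care is the transition from representations to permutations: one must invoke \Lref{lem:A2NC2} for the correct multiplicity $2^{m+1}$ and, crucially, note that it bounds the number of representations \emph{across all of} Construction~\ref{con:ATNC} (hence across both $\cC_{2m+1}$ and $\cC_{2m+2}$ simultaneously), which is what legitimises dividing the \emph{combined} representation count of the two codes by a single $2^{m+1}$. The remaining steps---evaluating $|\cC_{2m,1}^{sym}|$ via \Lref{lem:construction}, identifying ordered set-partitions with surjections, and the bookkeeping that collapses the two parities into one sum over $r$---are routine.
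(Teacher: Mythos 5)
Your proof is correct and follows essentially the same route as the paper's: the same decomposition of $\cC_n^{asym}$ into the pairwise disjoint blocks $\cC_{2m+1}\cup\cC_{2m+2}$ via \Lref{lem:CrCupCr+1}, the same representation count $r!\,S(n-2m,r)\cdot m!\,2^m$, and the same division by the multiplicity $2^{m+1}$ from \Lref{lem:A2NC2}, yielding the factor $\tfrac12$. Your added care in verifying $|\cC_{2m,1}^{sym}|=m!\,2^m$ through \Lref{lem:construction} and in noting the harmless truncation of the last block when $\lfloor n/2\rfloor$ is odd are minor refinements of details the paper leaves implicit.
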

\begin{proof}
For every $m$, $0\leq m\leq \frac{n-2}{4}$, we compute a lower bound on the size of $\cC_{2m+1}\cup\cC_{2m+2}$. There are $r!S(n-2m,r)$ choices for the partition $I_1,I_2,\ldots,I_r$, where $r=2m+1$ or $r=2m+2$, and there are $m!\cdot 2^{m}$ choices for the permutation $\pi\in \cC_{2m,1}$. By Lemma \ref{lem:A2NC2} the expression

\vspace{-2ex}
\begin{small}
$$[(2m+1)!S(n-2m,2m+1)+(2m+2)!S(n-2m,2m+2)]m!2^m \vspace{-3ex}$$
\end{small}

\noindent counts codewords in $\cC_{2m+1}\cup\cC_{2m+2}$ and each codeword in $\cC_{2m+1}\cup\cC_{2m+2}$ is counted at most $2^{m+1}$ times.
Hence, the size of $\cC_{2m+1}\cup\cC_{2m+2}$ is at least

\vspace{-2ex}
\begin{small}
$$[(2m+1)!S(n-2m,2m+1)+(2m+2)!S(n-2m,2m+2)]\frac{m!}{2}.\vspace{-3ex}$$
\end{small}

By Lemma \ref{lem:CrCupCr+1} it follows that the sets $\cC_{2m+1}\cup\cC_{2m+2}$ and $\cC_{2m'+1}\cup\cC_{2m'+2}$ are disjoint if $m'\neq m$, and therefore\vspace{-1.5ex}
$$ |\cC_n^{asym}|\geq\sum_{r=1}^{\lfloor\frac{n}{2}\rfloor}\frac{1}{2}r!S\Big(n-2\Big\lfloor \frac{r-1}{2}\Big\rfloor,r\Big)\Big\lfloor\frac{r-1}{2}\Big\rfloor!.\vspace{-3ex}$$
\end{proof}

Finally, the next theorem, which is a direct result of Lemma~\ref{lem:A2NCcard} and a lower bound on the Stirling numbers of the second kind, highlights the result of this section.
\begin{theorem}
\label{thm:A2NCcap}
For all $0\leq \epsilon\leq 1$, $\widetilde{C}(\epsilon)=1$.
\end{theorem}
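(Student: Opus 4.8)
The plan is to prove the two inequalities $\widetilde{C}(\epsilon)\le 1$ and $\widetilde{C}(\epsilon)\ge 1$ separately, with essentially all of the work residing in the lower bound. The upper bound is immediate: $B_{n,k}\subseteq S_n$ forces $|B_{n,k}|\le n!$, so $\frac{\log|B_{n,k}|}{\log n!}\le 1$ for every $n$, whence $\widetilde{C}(\epsilon)\le 1$. For the lower bound, I would first reduce to a single limit. For every $\epsilon\in[0,1]$ we have $k=\lceil n^{\epsilon}\rceil\ge 1$, and since a permutation meeting the asymmetric two-neighbor $1$-constraint automatically meets the $k$-constraint, $B_{n,1}\subseteq B_{n,k}$. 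By \Lref{lem:A2NC1}, $\cC_n^{asym}\subseteq B_{n,1}\subseteq B_{n,k}$, so $|B_{n,k}|\ge|\cC_n^{asym}|$ for all $\epsilon$. Hence it suffices to prove that $\frac{\log|\cC_n^{asym}|}{\log n!}\to 1$; this simultaneously gives $\widetilde{C}(\epsilon)\ge 1$ for every $\epsilon$ and, together with the upper bound, shows the limit exists, as anticipated in the earlier remark.

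Next I would extract a single favorable term from the sum in \Lref{lem:A2NCcard}. Since every summand is nonnegative, $|\cC_n^{asym}|$ is bounded below by any one term $\frac12\,r!\,S(\ell,r)\,m!$, where $m=\lfloor\frac{r-1}{2}\rfloor$ and $\ell=n-2m$. The key ingredient is the elementary lower bound on the Stirling numbers of the second kind $S(\ell,r)\ge r^{\ell-r}$, obtained by placing the elements $1,\dots,r$ in distinct blocks and distributing the remaining $\ell-r$ elements arbitrarily among the $r$ blocks (distinct distributions give distinct partitions). Discarding the factors $\frac12 r!$ and $m!$, which are at least $1$ for $r\ge 2$, gives $|\cC_n^{asym}|\ge r^{\ell-r}$ and therefore $\log|\cC_n^{asym}|\ge(\ell-r)\log r$.

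The crux is choosing $r$ so that $(\ell-r)\log r$ is as close as possible to $\log n!\sim n\log n$. Two competing requirements must be balanced: to achieve $\log r\sim\log n$ one needs $r=n^{1-o(1)}$, while to keep $\ell-r\sim n$ (and to respect $r\le\lfloor n/2\rfloor$ and $\ell\ge r$) one needs $r=o(n)$. Both hold for the choice $r=\lfloor n/(\log n)^2\rfloor$: then $m\approx r/2$, $\ell=n-2m\sim n$ and $\ell-r=n\bigl(1-o(1)\bigr)$, while $\log r=\log n-2\log\log n=\log n\,\bigl(1-o(1)\bigr)$. Consequently
$$ \log|\cC_n^{asym}|\ \ge\ (\ell-r)\log r\ =\ n\log n\,\bigl(1-o(1)\bigr). $$
Dividing by $\log n!=n\log n-O(n)=n\log n\,\bigl(1-o(1)\bigr)$ and letting $n\to\infty$ yields $\frac{\log|\cC_n^{asym}|}{\log n!}\to 1$, which completes the lower bound and hence the theorem.

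I expect the main obstacle to be pinning down the correct scale for $r$: too small a choice (for instance $r=\Theta(\sqrt n)$) costs a constant factor in $\log r$ and yields only capacity $\tfrac12$, whereas too large a choice ($r=\Theta(n)$) shrinks $\ell-r$ and the surjection count; the polylogarithmic deficit $r=n/(\log n)^2$ is precisely what saturates both factors at once. Once this scale is identified, the remaining ingredients—the bound $S(\ell,r)\ge r^{\ell-r}$, Stirling's estimate for $n!$, and the $o(1)$ bookkeeping—are routine.
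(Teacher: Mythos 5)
Your proof is correct and takes essentially the same approach as the paper: the trivial upper bound together with monotonicity in $k$ reduces the problem to the $k=1$ construction, and the lower bound follows exactly as the paper indicates, by combining \Lref{lem:A2NCcard} with a lower bound on the Stirling numbers of the second kind. Your specific instantiation --- extracting the single term with $r=\lfloor n/(\log n)^2\rfloor$ and using $S(\ell,r)\ge r^{\ell-r}$ --- is a valid way to fill in the step the paper omits.
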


\section{The Capacity of Error-Correcting Constrained Codes}\label{sec:ECC}
The two-neighbor constraint and the asymmetric two-neighbor constraint were proposed to combat errors that are caused by the inter-cell interference in flash memory cells. However, constrained codes should also be restricted to have error-correction capabilities, which is the topic of this section. A similar problem for the one-neighbor constraint was studied in~\cite{SD13}.

Given a permutation $\sigma=[\sigma(1),\sigma(2),\ldots,\sigma(n)]\in S_n$, an \emph{adjacent transposition} is an exchange of two adjacent elements
$\sigma(i),\sigma(i+1)$, in $\sigma$, for some $1\leq i\leq n-1$. The result of such an adjacent transposition is the permutation $[\sigma(1),\ldots,\sigma(i-1),\sigma(i+1),\sigma(i),\sigma(i+2),\ldots,\sigma(n)]$. The \emph{Kendall's $\tau$ distance} between two permutations
$\sigma,\pi\in S_n$, denoted by $d_{K} (\sigma,\pi)$, is the minimum number of adjacent transpositions required to obtain the permutation $\pi$ from the permutation $\sigma$.

For two permutations $\sigma,\pi\in S_n$, the \emph{inversion distance}, denoted by $d_{I}(\sigma,\pi)$, between $\sigma$ and $\pi$ is the Kendall's $\tau$ distance between their inverses, i.e.,\vspace{-1ex}
$$d_I(\sigma,\pi)=d_K(\sigma^{-1},\pi^{-1}).\vspace{-1ex}$$
Even though this distance was studied before, see e.g.~\cite{DiGr77}, we are not aware of any formal name for this metric and thus call it here the inversion distance. In this section we study the capacity of the constraints in this paper combined with a requirement of a minimum inversion distance.
\begin{remark}
We study the inversion distance and not the Kendall's $\tau$ one since, according to our representation of the cells ranking in a permutation, this metric fits better with the error behavior in flash memory cells. The motivation in studying codes in the Kendall's $\tau$ metric originated from the observation that cells with adjacent levels may interchange their rankings~\cite{JSB10}. Therefore, codes in the Kendall's $\tau$ metric should be invoked over the inverses of the permutations. However, in order to study these codes with constrained codes, one should take the inversion distance applied for the permutations.
\end{remark}

Let $E(n,k,d)$ be the maximum size of a code in $A_{n,k}$ with minimum inversion distance $d$. For $0\leq\epsilon_1 \leq 1$ and $0\leq \epsilon_2\leq 2$, let $k=\lceil n^{\epsilon_1}\rceil$ and $d=\lceil n^{\epsilon_2}\rceil$, and define the capacity of two-neighbor $k$-constrained codes with minimum inversion distance $d$ by
$$C(\epsilon_1,\epsilon_2)=\lim_{n\rightarrow \infty } \frac{\log E(n,k,d)}{\log n!}.$$

We will compute this capacity in terms of $\epsilon_1$ and $\epsilon_2$ by following some of the methods used in \cite{BM10} and later in \cite{SaDo13}. We distinguish between three cases:
\begin{enumerate}
\item $0\leq \epsilon_2\leq 1$ and $0\leq \epsilon_1\leq 1$,
\item $1<\epsilon_2\leq 1+\epsilon_1$, and $0\leq \epsilon_1\leq 1$,
\item $1+\epsilon_1<\epsilon_2\leq 2$ and $0\leq \epsilon_1\leq 1$.
\end{enumerate}

For $\sigma\in S_n$, the \emph{ball} in $S_n$ of radius $r$ centered at $\sigma$ is defined by\vspace{-1ex}
$$\mathcal{B}_I(n,\sigma,r)\deff\{\pi\in S_n~:~d_I(\sigma,\pi)\leq r\}.\vspace{-1ex}$$
The size of the ball $\mathcal{B}_I(n,\sigma,r)$ does not depend on $\sigma$ and thus we denote it by $b_I(n,r)$. For $\sigma\in A_{n,k}$, the \emph{ball} in $A_{n,k}$ of radius $r$ centered at $\sigma$ is defined by\vspace{-1ex}
$$\mathcal{B}_I(A_{n,k},\sigma,r)\deff\{\pi\in A_{n,k}~:~d_K(\sigma,\pi)\leq r\}.\vspace{-1ex}$$

A code in $A_{n,k}$ with minimum inversion distance $d$ can be constructed by a greedy approach which leads to the following Gilbert-Varshamov type of lower bound.
\begin{lemma}
\label{lem:upLowBound}
For every $1\leq k<n$, $1\leq d \leq {n\choose 2}$, the following lower bound on $E(n,k,d)$ holds\vspace{-1ex}
$$ E(n,k,d) \geq \frac{|A_{n,k}|}{b_I(n,d-1)}.\vspace{-1ex}$$
\end{lemma}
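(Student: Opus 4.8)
The plan is to establish this bound via a standard Gilbert--Varshamov (sphere-covering) argument adapted to the constrained set $A_{n,k}$. The key observation is that we are looking for a code that lives inside $A_{n,k}$, so every codeword must itself satisfy the two-neighbor $k$-constraint, and the minimum inversion distance must be at least $d$ among these constrained permutations. I would construct the code greedily: start with an empty code $\cC$ and repeatedly add any permutation $\sigma\in A_{n,k}$ whose inversion distance to every codeword already selected is at least $d$, stopping only when no such permutation remains.

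First I would argue that when the greedy process terminates, every permutation in $A_{n,k}$ must lie within inversion distance $d-1$ of some codeword; otherwise such a permutation could have been added, contradicting termination. In other words, the inversion-distance balls of radius $d-1$ centered at the codewords cover all of $A_{n,k}$. Formally, $A_{n,k}\subseteq\bigcup_{\sigma\in\cC}\cB_I(n,\sigma,d-1)$, where the relevant balls are taken in the full symmetric group $S_n$ (so their common size is $b_I(n,d-1)$), and it suffices to restrict attention to their intersection with $A_{n,k}$.

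Next I would bound the total covered volume. Since $\cC\subseteq A_{n,k}$ covers $A_{n,k}$, a union bound gives $|A_{n,k}|\leq\sum_{\sigma\in\cC}|\cB_I(n,\sigma,d-1)|=|\cC|\cdot b_I(n,d-1)$, using the fact noted earlier in the excerpt that the size of an inversion-distance ball does not depend on its center. Rearranging yields $|\cC|\geq|A_{n,k}|/b_I(n,d-1)$. Since the greedily constructed code $\cC$ is by design a code in $A_{n,k}$ with minimum inversion distance at least $d$, and $E(n,k,d)$ is the maximum size of any such code, we conclude $E(n,k,d)\geq|\cC|\geq|A_{n,k}|/b_I(n,d-1)$.

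This argument is almost entirely routine; the only point requiring genuine care—and thus the step I expect to be the main obstacle—is the covering claim. One must verify that the balls are counted with the correct common cardinality $b_I(n,d-1)$ even though we only need them to cover $A_{n,k}$ rather than all of $S_n$: the bound $|A_{n,k}|\leq|\cC|\,b_I(n,d-1)$ overcounts by including parts of each ball lying outside $A_{n,k}$, which is harmless for an upper bound on $|A_{n,k}|$ but must be phrased so that the radius-$(d-1)$ ball size in $S_n$ is indeed what multiplies $|\cC|$. Confirming that the greedy stopping condition precisely forces the radius-$(d-1)$ covering (not radius $d$) is the one place where an off-by-one error could creep in, so I would state that step explicitly.
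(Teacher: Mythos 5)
Your proof is correct and follows exactly the route the paper intends: the paper itself only remarks that the bound follows from ``a greedy approach'' yielding a Gilbert--Varshamov type bound, and your greedy construction with the radius-$(d-1)$ covering argument and the center-independence of $b_I(n,d-1)$ is precisely that standard argument, carried out carefully (including the off-by-one point).
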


The next theorem is a combination of results from \cite{BM10}, \cite{LoPr03}, and \cite{Mar01}.
\begin{theorem}
\label{thm:KsphereUpBound} Let $r=\Theta(n^{\delta})$, where $0\leq \delta \leq 2$. Then there exist constants $c_1$ and $c_2$ such that\vspace{-1ex}
$$
b_I(n,r)\leq \begin{cases}
e^{c_1n} ,& 0\leq \delta\leq 1,\\
(c_2n^{\delta-1})^n,  & 1<\delta \leq 2.
 \end{cases}\vspace{-1ex}
$$
\end{theorem}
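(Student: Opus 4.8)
The statement to prove is Theorem~\ref{thm:KsphereUpBound}, which bounds the size of the inversion-distance ball $b_I(n,r)$ for $r=\Theta(n^\delta)$.

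First I would recall that the inversion distance is defined via the inverse permutation, $d_I(\sigma,\pi)=d_K(\sigma^{-1},\pi^{-1})$, and that $\sigma\mapsto\sigma^{-1}$ is a bijection on $S_n$ that preserves distances in the obvious way. Consequently $b_I(n,r)=b_K(n,r)$, the size of a Kendall's $\tau$ ball of radius $r$; so it suffices to bound the Kendall's $\tau$ ball. This reduction is the clean starting point and lets me invoke the known combinatorial description of Kendall's $\tau$ balls directly.

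The plan is to identify $b_K(n,r)$ with the number of permutations of $[n]$ having at most $r$ inversions, since $d_K(\sigma,\mathrm{id})$ equals the number of inversions of $\sigma$ and the ball size is translation invariant. The number of permutations of $[n]$ with exactly $j$ inversions is the Mahonian number, generated by the Gaussian-binomial/$q$-factorial $\prod_{i=1}^{n}\frac{1-q^i}{1-q}$, so $b_K(n,r)=\sum_{j=0}^{r}I(n,j)$ where $I(n,j)$ is the Mahonian count. For the regime $0\le\delta\le 1$ (so $r=O(n)$) I would cite the estimate from \cite{BM10} that yields an $e^{c_1 n}$ bound: when the radius is linear in $n$ the number of inversions available per adjacent pair is bounded on average, and the entropy of the inversion-vector representation (each coordinate $x_i$ ranging in $[0,i-1]$ subject to $\sum x_i\le r$) grows only linearly, giving a bound of the form $e^{c_1 n}$. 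For the regime $1<\delta\le 2$ (so $r$ grows superlinearly), I would use the counting/volume estimates from \cite{LoPr03} and \cite{Mar01}: here the constraint $\sum_{i=1}^n x_i\le r$ with $0\le x_i\le i-1$ has volume governed by the average coordinate size $r/n=\Theta(n^{\delta-1})$, and a standard stars-and-bars or generating-function argument gives the bound $(c_2 n^{\delta-1})^n$.

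The main obstacle, and the part that genuinely requires the cited results rather than a one-line argument, is obtaining sharp exponential-order control on the partial sum of Mahonian numbers in the two distinct growth regimes; the naive bound $b_K(n,r)\le\binom{r+n-1}{n-1}$ from counting solutions of $\sum x_i\le r$ is too weak near $\delta=1$ and must be refined using the per-coordinate caps $x_i\le i-1$. Since \Tref{thm:KsphereUpBound} explicitly states it is a combination of \cite{BM10}, \cite{LoPr03}, and \cite{Mar01}, I would assemble the proof by quoting the relevant lemma from each source for the matching range of $\delta$, verifying that the threshold at $\delta=1$ is consistent (both expressions are $e^{\Theta(n)}$ there), and then setting $c_1$ and $c_2$ to absorb the implied constants uniformly over the stated interval.
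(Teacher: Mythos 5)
Your proposal is correct and, in substance, takes the same route as the paper: the paper offers no proof of this theorem at all, presenting it exactly as you do, namely as an assembly of the results cited from \cite{BM10}, \cite{LoPr03}, and \cite{Mar01}. Your opening reduction is right and worth stating explicitly: since $\sigma\mapsto\sigma^{-1}$ is a bijection of $S_n$ and $d_I(\sigma,\pi)=d_K(\sigma^{-1},\pi^{-1})$, inversion balls and Kendall balls have equal sizes, and by right-invariance of $d_K$ the ball size is center-independent and equals the number of permutations with at most $r$ inversions.

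However, your diagnosis of the ``main obstacle'' is backwards, and this is the one substantive flaw in the write-up: the naive bound you dismiss is in fact strong enough for both regimes, so no refinement via the per-coordinate caps $x_i\le i-1$ (and indeed no citation) is needed for these upper bounds. Encoding each permutation by its inversion vector, a permutation with at most $r$ inversions yields a nonnegative integer solution of $\sum_{i=1}^n x_i\le r$, hence
$$
b_I(n,r)\;\le\;\binom{n+r}{n}\;\le\;\parenv{\frac{e(n+r)}{n}}^{\!n}.
$$
If $0\le\delta\le 1$ then $r\le Cn$ for some constant $C$, so the right-hand side is at most $\parenv{e(1+C)}^n=e^{c_1n}$ with $c_1=1+\ln(1+C)$. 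If $1<\delta\le 2$ then $r\le Cn^{\delta}$, so $\frac{e(n+r)}{n}\le e\parenv{1+Cn^{\delta-1}}\le e(1+C)\,n^{\delta-1}$ (using $n^{\delta-1}\ge 1$), giving the bound $\parenv{c_2n^{\delta-1}}^n$ with $c_2=e(1+C)$. The caps $x_i\le i-1$ and the Mahonian/generating-function machinery of the cited papers matter for sharp asymptotics and for \emph{lower} bounds on ball sizes, not for the stated upper bounds. So your argument can be made entirely self-contained in a few lines, which is stronger than the citation-assembly you propose and than the paper's own treatment.
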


We are now in a position to compute the capacity $C(\epsilon_1,\epsilon_2)$ for the first case.
\begin{theorem}
\label{thm:capLeq1}
For ${0\leq \epsilon_1, \epsilon_2\leq 1}$, $ C(\epsilon_1,\epsilon_2)=\frac{1}{2}+\frac{\epsilon_1}{2}$.
\end{theorem}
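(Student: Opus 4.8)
The plan is to establish the theorem by sandwiching $C(\epsilon_1,\epsilon_2)$ between matching lower and upper bounds, each of which evaluates to $\frac{1}{2}+\frac{\epsilon_1}{2}$ in the regime $0\leq\epsilon_1,\epsilon_2\leq 1$. The key observation driving both bounds is that in this case the distance requirement is so weak (since $d=\lceil n^{\epsilon_2}\rceil$ with $\epsilon_2\leq 1$ means $d$ grows subexponentially relative to $n!$) that it contributes nothing to the capacity, and the answer is forced to coincide with the unconstrained-distance capacity $C(\epsilon_1)=\frac{1+\epsilon_1}{2}$ from \Cref{cor:2NC}.

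For the lower bound, I would invoke the Gilbert–Varshamov bound of \Lref{lem:upLowBound}, which gives $E(n,k,d)\geq |A_{n,k}|/b_I(n,d-1)$. Taking logarithms and dividing by $\log n!$, the first term contributes $\frac{\log|A_{n,k}|}{\log n!}$, which tends to $\frac{1+\epsilon_1}{2}$ by \Cref{cor:2NC}. The point is to show the subtracted term $\frac{\log b_I(n,d-1)}{\log n!}$ vanishes in the limit. Here I would apply \Tref{thm:KsphereUpBound} with $\delta=\epsilon_2\leq 1$, which yields $b_I(n,d-1)\leq e^{c_1 n}$, so $\log b_I(n,d-1)=O(n)$. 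Since $\log n!=\Theta(n\log n)$ by Stirling's approximation, the ratio $\frac{O(n)}{\Theta(n\log n)}\to 0$, and the lower bound converges to $\frac{1}{2}+\frac{\epsilon_1}{2}$.

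For the upper bound, the natural approach is a packing (sphere-covering) argument: any code in $A_{n,k}$ with minimum inversion distance $d$ has its radius-$\lfloor (d-1)/2\rfloor$ balls disjoint, but these balls all live inside $A_{n,k}$ only up to the intersection, so the cleanest route is simply $E(n,k,d)\leq |A_{n,k}|$, giving the trivial bound $\frac{1+\epsilon_1}{2}$ directly from \Cref{cor:2NC}. Because the lower bound already matches $\frac{1}{2}+\frac{\epsilon_1}{2}$ and the upper bound $E(n,k,d)\leq|A_{n,k}|$ gives the same value, the limit exists and equals $\frac{1}{2}+\frac{\epsilon_1}{2}$, which also justifies replacing $\limsup$ by $\lim$ in the capacity definition for this case.

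The main obstacle is a technical one rather than conceptual: making sure the asymptotic estimates $\log|A_{n,k}|\sim\frac{1+\epsilon_1}{2}\log n!$ and $\log b_I(n,d-1)=o(\log n!)$ are applied with the correct dependence of $k$ and $d$ on $n$, and confirming that the vanishing of the ball-size correction holds uniformly across the whole range $0\leq\epsilon_2\leq 1$ (including the boundary $\epsilon_2=1$, where \Tref{thm:KsphereUpBound} still gives the exponential-in-$n$ bound $e^{c_1 n}$ rather than a larger one). Once the estimate $\log b_I(n,d-1)=O(n)=o(n\log n)$ is in hand, both bounds collapse to $\frac{1}{2}+\frac{\epsilon_1}{2}$ and the result follows immediately.
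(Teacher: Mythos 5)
Your proposal is correct and follows essentially the same route as the paper's own proof: the upper bound via the trivial inclusion $E(n,k,d)\leq|A_{n,k}|$ together with \Cref{cor:2NC}, and the lower bound via the Gilbert--Varshamov bound of \Lref{lem:upLowBound} combined with the ball-size estimate $b_I(n,d-1)\leq e^{c_1 n}$ from \Tref{thm:KsphereUpBound} (case $\delta\leq 1$), whose contribution $O(n)$ is negligible against $\log n!=\Theta(n\log n)$. The only difference is that you spell out the Stirling-approximation justification and the boundary case $\epsilon_2=1$, which the paper leaves implicit.
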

\begin{proof}
Since $E(n,k,d)\subseteq A_{n,k}$ it follows that\vspace{-1ex}
$$\frac{\log E(n,k,d)}{\log n!}\leq \frac{\log |A_{n,k}|}{\log n!},\vspace{-1ex}$$
and hence from Corollary~\ref{cor:2NC}, $ C(\epsilon_1,\epsilon_2)\leq C(\epsilon_1)=\frac{1}{2}+\frac{\epsilon_1}{2}$.

By Lemma \ref{lem:upLowBound} and Theorem \ref{thm:KsphereUpBound} there exists a constant $c$ such that\vspace{-1ex}
$$\frac{\log E(n,k,d)}{\log n!}\geq \frac{\log |A_{n,k}|}{\log n!}-\frac{\log e^{cn}}{\log n!}.\vspace{-1ex}$$
Then, $C(\epsilon_1,\epsilon_2)\geq C(\epsilon_1)=\frac{1}{2}+\frac{\epsilon_1}{2},$ and thus, $C(\epsilon_1,\epsilon_2)=\frac{1}{2}+\frac{\epsilon_1}{2}$.
\end{proof}

Before proceeding to the second case, let us introduce some more tools that we will use in solving this case. Let $H_n=\{1,2,\ldots,n\}^n$. For $\mathbf{x},\mathbf{y}\in H_n$, the Manhattan distance between $\mathbf{x}$ and $\mathbf{y}$,
$d_M(\mathbf{x},\mathbf{y})$, is defined as\vspace{-1.5ex}
$$d_M(\mathbf{x},\mathbf{y})\deff\sum_{i=1}^n|x_i-y_i|.\vspace{-1ex}$$
The next lemma was proved in \cite{DiGr77}.
\begin{lemma}
\label{lem:metrics}
For every $\sigma,\pi\in S_n$,\vspace{-1ex}
$$\frac{1}{2}d_M(\sigma,\pi)\leq d_I(\sigma,\pi)\leq d_M(\sigma,\pi).\vspace{-1ex}$$
\end{lemma}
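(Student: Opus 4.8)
The plan is to collapse both distances onto a single permutation and then recognize the two inequalities as the classical Diaconis--Graham bounds. First I would use that Kendall's $\tau$ distance counts inversions: for all $\alpha,\beta\in S_n$, $d_K(\alpha,\beta)=\mathrm{inv}(\alpha^{-1}\beta)$, where $\mathrm{inv}(\tau)=|\{(i,j):i<j,\ \tau(i)>\tau(j)\}|$ (this is the Coxeter length of $\alpha^{-1}\beta$, and it shows $d_K$ is left-invariant). Setting $\omega=\sigma\pi^{-1}$, the definition $d_I(\sigma,\pi)=d_K(\sigma^{-1},\pi^{-1})$ then gives $d_I(\sigma,\pi)=\mathrm{inv}\big((\sigma^{-1})^{-1}\pi^{-1}\big)=\mathrm{inv}(\omega)$. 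For the other distance, $d_M$ is right-invariant, since $d_M(\sigma\tau,\pi\tau)=\sum_i|\sigma(\tau(i))-\pi(\tau(i))|=d_M(\sigma,\pi)$; choosing $\tau=\pi^{-1}$ yields $d_M(\sigma,\pi)=\sum_i|\omega(i)-i|$, i.e. Spearman's footrule $F(\omega)$ of the \emph{same} permutation $\omega$. Hence the lemma is precisely the assertion $\mathrm{inv}(\omega)\le F(\omega)\le 2\,\mathrm{inv}(\omega)$ for every $\omega\in S_n$.

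The left inequality of the lemma, $\tfrac12 d_M\le d_I$, is the easy bound $F(\omega)\le 2\,\mathrm{inv}(\omega)$, which I would dispatch quickly. A single adjacent transposition changes $\mathrm{inv}$ by exactly $\pm1$ and changes $F$ by at most $2$: swapping the entries $a,b$ in positions $i,i+1$ alters only the two summands indexed by $i$ and $i+1$, and the change $(|b-i|-|b-(i+1)|)+(|a-(i+1)|-|a-i|)$ lies in $[-2,2]$ because $\big||x-i|-|x-(i+1)|\big|\le1$. Since $\omega$ is sorted to the identity in exactly $\mathrm{inv}(\omega)$ adjacent transpositions, $F(\omega)=|F(\omega)-F(\mathrm{id})|\le 2\,\mathrm{inv}(\omega)$. (Equivalently, with $A_i=|\{j>i:\omega(j)<\omega(i)\}|$ and $B_i=|\{j<i:\omega(j)>\omega(i)\}|$ one checks $\omega(i)-i=A_i-B_i$, whence $F(\omega)=\sum_i|A_i-B_i|\le\sum_i(A_i+B_i)=2\,\mathrm{inv}(\omega)$.)

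The right inequality, $d_I\le d_M$, i.e. $\mathrm{inv}(\omega)\le F(\omega)$, is the real content and the step I expect to be the obstacle; it is exactly the lower Diaconis--Graham inequality, so in the write-up I would cite \cite{DiGr77} for it. The difficulty is intrinsic: every local or termwise estimate I can produce only reproduces the \emph{other} direction $F\le 2\,\mathrm{inv}$, since under any adjacent swap $F$ moves by at most twice the change in $\mathrm{inv}$, and the comparison $|A_i-B_i|\le A_i+B_i$ is also one-sided. A correct proof must be global --- for instance Diaconis and Graham's sharper statement $\mathrm{inv}(\omega)+T(\omega)\le F(\omega)$, where $T(\omega)$ is the minimum number of (not necessarily adjacent) transpositions sorting $\omega$, from which $\mathrm{inv}(\omega)\le F(\omega)$ follows since $T(\omega)\ge0$. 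Combining the two bounds gives $\mathrm{inv}(\omega)\le F(\omega)\le 2\,\mathrm{inv}(\omega)$, which is the claimed $\tfrac12 d_M(\sigma,\pi)\le d_I(\sigma,\pi)\le d_M(\sigma,\pi)$.
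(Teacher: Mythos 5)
Your proof is correct and ultimately rests on the same source as the paper, whose "proof" of this lemma is simply the remark that it was established in \cite{DiGr77}. Your reduction via left-invariance of $d_K$ and right-invariance of $d_M$ to the single-permutation form $\mathrm{inv}(\omega)\le F(\omega)\le 2\,\mathrm{inv}(\omega)$ (with $\omega=\sigma\pi^{-1}$) is sound, your direct argument for the easy half $F(\omega)\le 2\,\mathrm{inv}(\omega)$ is valid, and deferring to \cite{DiGr77} for the hard half $\mathrm{inv}(\omega)\le F(\omega)$ matches --- indeed makes explicit and partially replaces --- what the paper leaves entirely to the citation.
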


The definition of the two-neighbor $k$-constraint can be trivially extended to $H_n$. A vector $\mathbf{x}\in H_n$ satisfies the two-neighbor $k$-constraint if either $|x_i-x_{i-1}|\leq k$ or $|x_{i+1}-x_i|\leq k$, for all $2\leq i\leq n-1$.
Let $\mathcal{A}_{n,k}$ be the set of all elements of $H_n$ that satisfy the two-neighbor $k$-constraint.

For a subset $S\subseteq H_n$ and $\mathbf{x}\in S$, the Manhattan \emph{ball} in $S$ of radius $r$ centered at $\mathbf{x}$ is defined by\vspace{-1ex}
$$\mathcal{B}_M(S,\mathbf{x},r)\deff\{\mathbf{y}\in S~:~d_M(\mathbf{x},\mathbf{y})\leq r\}.\vspace{-1ex}$$

Combining the previous results along with the sphere packing upper bound and Gilbert-Varshamov lower bound provides us with the following lemma.
\begin{lemma}\label{lem:ElowUp}
For every $1\leq k<n$, $1\leq d \leq {n\choose 2}$,\vspace{-1.5ex}
$$E(n,k,d)\leq  \frac{|\mathcal{A}_{n,k}|}{\min_{\mathbf{x}\in \mathcal{A}_{n,k}}\{|\mathcal{B}_M(\mathcal{A}_{n,k},\mathbf{x},\left\lfloor\frac{d-1}{2}\right\rfloor)|\}}.\vspace{-1.5ex}$$
and\vspace{-1.5ex}
$$E(n,k,d)\geq \frac{|A_{n,k}|}{\max_{\mathbf{x}\in \mathcal{A}_{n,k}}\{|\mathcal{B}_M(\mathcal{A}_{n,k},\mathbf{x},2d-1)|\}}.\vspace{-1ex}$$
\end{lemma}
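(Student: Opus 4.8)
The plan is to prove both inequalities by the standard sphere-packing (upper) and Gilbert--Varshamov (lower) arguments, but carried out in the Manhattan metric on $H_n$ rather than directly in the inversion metric, transferring between the two via Lemma~\ref{lem:metrics}. The one structural fact I would record at the outset is that $A_{n,k}=S_n\cap\mathcal{A}_{n,k}$, so in particular $A_{n,k}\subseteq\mathcal{A}_{n,k}$: every codeword of a two-neighbor $k$-constrained code is also a point of $\mathcal{A}_{n,k}$, hence a legitimate center for a Manhattan ball inside $\mathcal{A}_{n,k}$.

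For the upper bound I would fix an optimal code $\mathcal{C}\subseteq A_{n,k}$ of size $E(n,k,d)$ with minimum inversion distance $d$. For distinct $\sigma,\pi\in\mathcal{C}$, the right inequality of Lemma~\ref{lem:metrics} gives $d_M(\sigma,\pi)\geq d_I(\sigma,\pi)\geq d$. Setting $r=\lfloor\frac{d-1}{2}\rfloor$, the triangle inequality shows that the balls $\mathcal{B}_M(\mathcal{A}_{n,k},\sigma,r)$ for $\sigma\in\mathcal{C}$ are pairwise disjoint, since a common point would force $d_M(\sigma,\pi)\leq 2r\leq d-1<d$. As these disjoint balls all lie inside $\mathcal{A}_{n,k}$, summing their sizes and bounding each below by the minimum ball size yields $E(n,k,d)\cdot\min_{\mathbf{x}}|\mathcal{B}_M(\mathcal{A}_{n,k},\mathbf{x},r)|\leq|\mathcal{A}_{n,k}|$, which is the first claim.

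For the lower bound I would build a code greedily: starting from $A_{n,k}$, repeatedly adjoin any permutation whose inversion distance to all previously chosen codewords is at least $d$. When this halts with a code $\mathcal{C}$, maximality means every $\pi\in A_{n,k}$ satisfies $d_I(\pi,\sigma)\leq d-1$ for some $\sigma\in\mathcal{C}$. The left inequality of Lemma~\ref{lem:metrics} then gives $d_M(\pi,\sigma)\leq 2d_I(\pi,\sigma)\leq 2d-2\leq 2d-1$, so $\pi\in\mathcal{B}_M(\mathcal{A}_{n,k},\sigma,2d-1)$. Hence the balls of radius $2d-1$ about the codewords cover $A_{n,k}$, giving $|A_{n,k}|\leq|\mathcal{C}|\cdot\max_{\mathbf{x}}|\mathcal{B}_M(\mathcal{A}_{n,k},\mathbf{x},2d-1)|$; since $\mathcal{C}$ is a valid code, $E(n,k,d)\geq|\mathcal{C}|$ and the second claim follows.

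The routine parts are the two triangle-inequality and covering steps; the only real care needed is the bookkeeping of radii, which is exactly where the factor-of-two gap between $d_I$ and $d_M$ in Lemma~\ref{lem:metrics} enters asymmetrically---costing a factor $2$ in the packing radius $\lfloor\frac{d-1}{2}\rfloor$ and a factor $2$ in the covering radius $2d-1$. I would double-check that the set of $\pi\in A_{n,k}$ with $d_I(\pi,\sigma)\leq d-1$ is genuinely contained in $\mathcal{B}_M(\mathcal{A}_{n,k},\sigma,2d-1)$, and that the disjointness estimate $2r\leq d-1$ survives the floor function, since this is the step most prone to an off-by-one slip.
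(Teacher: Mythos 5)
Your proof is correct and matches the approach the paper intends: the paper omits the details, describing the lemma as following from ``the previous results along with the sphere packing upper bound and Gilbert-Varshamov lower bound,'' which is exactly your argument --- packing disjoint Manhattan balls of radius $\left\lfloor\frac{d-1}{2}\right\rfloor$ inside $\mathcal{A}_{n,k}$ using $d_I\leq d_M$, and covering $A_{n,k}$ by Manhattan balls of radius $2d-1$ around a maximal code using $d_M\leq 2d_I$. Your radius bookkeeping (including the observation $2d-2\leq 2d-1$ and the embedding $A_{n,k}\subseteq\mathcal{A}_{n,k}$ justifying the min/max over centers in $\mathcal{A}_{n,k}$) is sound.
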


In order to apply the upper bound from Lemma~\ref{lem:ElowUp}, we state in the next lemma a lower bound on the size of a ball in the Manhattan distance in $\mathcal{A}_{n,k}$.
\begin{lemma}\label{lem:BMAlower}
Let $k=\lceil n^{\epsilon}\rceil$ and $r=\lceil n^{\delta}\rceil $, where $0\leq \epsilon \leq 1, 0\leq \delta \leq 2$.
Then there exists a constant $c$ such that\vspace{-1ex}
$$
\min_{\mathbf{x}\in \mathcal{A}_{n,k}}\hspace{-1ex}\{|\mathcal{B}_M(\mathcal{A}_{n,k},\mathbf{x},r)|\}\geq \begin{cases}
\left(\frac{n^{\delta-1}}{2}\right)^n &\hspace{-2ex}, 1<\delta<1+\epsilon<2\\
\left(\frac{n^{\delta-1+\epsilon}}{c}\right)^{\frac{n}{2}} &\hspace{-2ex}, 1+\epsilon\leq \delta<  2
\end{cases}\vspace{-1ex}
$$
\end{lemma}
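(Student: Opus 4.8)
The plan is to fix an arbitrary centre $\mathbf{x}\in\mathcal{A}_{n,k}$ and to exhibit explicitly a family of vectors $\mathbf{y}\in\mathcal{A}_{n,k}$ with $d_M(\mathbf{x},\mathbf{y})\le r$ whose size is at least the claimed bound. Since the bound must hold uniformly in the centre, the construction has to be driven only by the structural features shared by \emph{every} element of $\mathcal{A}_{n,k}$. The key such feature is that the \emph{small} edges of $\mathbf{x}$ (the indices $i$ with $|x_i-x_{i+1}|\le k$) cover every interior position $2\le i\le n-1$, and consequently no two consecutive edges can both be \emph{large}. Hence at least $\lfloor (n-1)/2\rfloor$ of the edges are small, and $\mathbf{x}$ decomposes into maximal runs of small edges (blocks) separated by isolated large edges.

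First I would perturb $\mathbf{x}$ only in ways that keep each position covered by a small edge, so that membership in $\mathcal{A}_{n,k}$ is automatic. Concretely, I would select a set of disjoint small edges and, on each selected pair $\{i,i+1\}$, replace $(x_i,x_{i+1})$ by any $(y_i,y_{i+1})$ with $|y_i-y_{i+1}|\le k$ that lies within the per-coordinate budget $r/n=\Theta(n^{\delta-1})$ of $(x_i,x_{i+1})$; positions not touched by a selected edge are left fixed and remain covered by their original small edge. Summed over all coordinates this costs at most $n\cdot n^{\delta-1}=n^{\delta}=r$, so every resulting $\mathbf{y}$ lies in the ball, and near the two ends of the value range one moves towards the interior of $[1,n]$, which costs only a bounded factor absorbed into the constant $c$.

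The count then reduces to a two-dimensional lattice-point estimate on each selected pair: the number of admissible $(y_i,y_{i+1})$ equals the number of integer points in the intersection of an $\ell_1$-box of side $\Theta(n^{\delta-1})$ with the strip $|y_i-y_{i+1}|\le k$. This is exactly where the two cases separate. When $n^{\delta-1}<k$, i.e.\ $\delta<1+\epsilon$, the strip is inactive and each pair contributes $\Theta((n^{\delta-1})^{2})$; over the $\Theta(n/2)$ selected pairs this yields $(n^{\delta-1}/2)^{n}$. When $n^{\delta-1}\ge k$, i.e.\ $1+\epsilon\le\delta$, the strip is binding and each pair contributes only $\Theta(n^{\delta-1}\cdot k)=\Theta(n^{\delta-1+\epsilon})$; over the $\Theta(n/2)$ pairs this yields $(n^{\delta-1+\epsilon}/c)^{n/2}$.

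The hard part will be guaranteeing, uniformly over $\mathbf{x}$, that enough \emph{independent} perturbation directions survive the constraint to reach the target exponents (full $n$ in the first case, $n/2$ in the second). Extracting $\sim n/2$ disjoint small edges is immediate when the small-edge runs are long or of even length, but when $\mathbf{x}$ consists of many short runs some interior positions are forced to stay within $k$ of \emph{both} neighbours and so contribute less freedom; reconciling this with the exponents is the delicate point, and is where one must combine the block decomposition with the Manhattan-ball estimates of \Tref{thm:KsphereUpBound} and the results of the cited works rather than a purely local pair-by-pair count. The boundary effect, namely coordinates of $\mathbf{x}$ pinned at $1$ or $n$ that can move only one way, is a secondary nuisance handled by the same one-sided-perturbation trick and absorbed into $c$.
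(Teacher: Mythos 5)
Since the paper omits its own proof of this lemma, your attempt is judged on its merits. Your skeleton---decomposition of $\mathbf{x}$ into blocks of small edges, local perturbations, and the two-regime lattice-point count---is the right one, but two steps fail as written. First, the claim that membership in $\mathcal{A}_{n,k}$ is ``automatic'' because untouched positions ``remain covered by their original small edge'' is false: coverage of an untouched position can hinge on an edge into a touched position. Concretely, if a block is $(i,i+1,i+2)$ (both internal edges small, both flanking edges large) and you select the pair $\{i+1,i+2\}$, then $y_i=x_i$ is unchanged but $y_{i+1}$ moves by up to $\Theta(n^{\delta-1})$, which in the regime $\delta\ge 1+\epsilon$ exceeds $k$; then both edges incident to position $i$ can be large in $\mathbf{y}$, so $\mathbf{y}\notin\mathcal{A}_{n,k}$. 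Guaranteeing membership forces every interior position to lie inside a selected unit whose internal differences you control; disjoint small edges alone cannot always achieve this.

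Second, this is not a secondary nuisance but exactly where your exponent collapses, and the rescue you propose cannot work. Take $\mathbf{x}$ made entirely of size-3 blocks (pattern small, small, large, repeated): any family of disjoint small edges covers at most $2/3$ of the positions, so the pair-by-pair count in case 1 gives at best $\bigl((n^{\delta-1})^2\bigr)^{n/3}=(n^{\delta-1})^{2n/3}$, short of the claimed $(n^{\delta-1}/2)^n$. Appealing to \Tref{thm:KsphereUpBound} or to the cited works cannot close this gap: those results are \emph{upper} bounds on inversion balls in $S_n$, and cannot yield a \emph{lower} bound on Manhattan balls in $\mathcal{A}_{n,k}$. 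The fix is local and elementary: partition each block into adjacent pairs plus, for odd-size blocks, one triple, so that the units cover all interior positions, and perturb each unit using only sign-constrained perturbations that never increase its internal differences---e.g.\ for a pair with $x_i\ge x_{i+1}$ allow $(y_i,y_{i+1})=(x_i+s,x_{i+1}+t)$ only when $s\le t$, and analogously for triples. A constant fraction of the $\ell_1$-ball survives these sign constraints, so a triple still contributes $\Theta\bigl((n^{\delta-1})^3\bigr)$ in case 1 and $\Theta\bigl(n^{2(\delta-1)+\epsilon}\bigr)\ge\Theta\bigl((n^{\delta-1+\epsilon})^{3/2}\bigr)$ in case 2 (here $\delta-1\ge\epsilon$ is what saves you), restoring the exponents $n$ and $n/2$ respectively; unmatched end positions and the $[1,n]$ boundary are then handled as you suggest. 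With this repair your outline becomes a proof.
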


We are ready to prove the capacity for the second case.
\begin{theorem}
\label{thm:cap1toEpsilon}
For $0\leq\epsilon_1\leq 1$ and $1<\epsilon_2\leq 1+\epsilon_1$,\vspace{-1ex}
$$ C(\epsilon_1,\epsilon_2)=\frac{3}{2}+\frac{\epsilon_1}{2}-{\epsilon_2}.\vspace{-1ex}$$
\end{theorem}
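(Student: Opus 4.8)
The plan is to pin down the claimed value by a matching pair of bounds on the normalized quantity $\frac{\log E(n,k,d)}{\log n!}$, feeding the sphere-packing and Gilbert--Varshamov machinery already assembled. The whole argument rests on observing that here $d=\lceil n^{\epsilon_2}\rceil$ with $1<\epsilon_2\leq 1+\epsilon_1\leq 2$, so the radius lands in the ``large'' regime $1<\delta\leq 2$ of \Tref{thm:KsphereUpBound} and in the first (or the seam of the) regime of \Lref{lem:BMAlower}. Throughout I would use $\log n!=(1+o(1))\,n\log n$, the fact from \Cref{cor:2NC} that $\log|A_{n,k}|=(\tfrac{1+\epsilon_1}{2}+o(1))\log n!$, and that the additive $O(n)$ terms arising from constants are $o(\log n!)$.

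For the lower bound on the capacity I would invoke the Gilbert--Varshamov estimate of \Lref{lem:upLowBound}, $E(n,k,d)\geq |A_{n,k}|/b_I(n,d-1)$, and bound the denominator by \Tref{thm:KsphereUpBound} with $\delta=\epsilon_2\in(1,2]$, giving $b_I(n,d-1)\leq (c_2 n^{\epsilon_2-1})^n$. Taking logarithms and dividing by $\log n!$, the numerator contributes $\tfrac{1+\epsilon_1}{2}$ and the denominator contributes $(\epsilon_2-1)\cdot\frac{n\log n}{\log n!}\to\epsilon_2-1$, whence $C(\epsilon_1,\epsilon_2)\geq \tfrac{1+\epsilon_1}{2}-(\epsilon_2-1)=\tfrac32+\tfrac{\epsilon_1}{2}-\epsilon_2$.

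For the upper bound I would use the sphere-packing half of \Lref{lem:ElowUp}, $E(n,k,d)\leq |\mathcal{A}_{n,k}|/\min_{\mathbf{x}}|\mathcal{B}_M(\mathcal{A}_{n,k},\mathbf{x},\lfloor\tfrac{d-1}{2}\rfloor)|$. Two inputs are needed. First, that $\log|\mathcal{A}_{n,k}|=(\tfrac{1+\epsilon_1}{2}+o(1))\log n!$: this follows by rerunning the difference-vector counting of \Lref{lem:upper} for vectors of $H_n$, where a ``small'' difference now has $2k+1$ rather than $2k$ choices and a ``large'' one at most $2n-1$, leaving the exponent unchanged. Second, the Manhattan-ball bound of \Lref{lem:BMAlower} with $\delta=\epsilon_2$: for $1<\epsilon_2<1+\epsilon_1$ its first regime yields $\min|\mathcal{B}_M|\geq (n^{\epsilon_2-1}/2)^n$, while the endpoint $\epsilon_2=1+\epsilon_1$ is covered by the second regime $(n^{\epsilon_2-1+\epsilon_1}/c)^{n/2}$; a direct check shows both give the same normalized exponent $\epsilon_2-1$, so the seam is consistent. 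Dividing $\log E\leq \log|\mathcal{A}_{n,k}|-\log\min|\mathcal{B}_M|$ by $\log n!$ then gives $C(\epsilon_1,\epsilon_2)\leq \tfrac{1+\epsilon_1}{2}-(\epsilon_2-1)=\tfrac32+\tfrac{\epsilon_1}{2}-\epsilon_2$, matching the lower bound and forcing the limit to exist.

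The normalization arithmetic is routine; the two points I would treat as the real obstacles are (i) transferring the cardinality estimate from the permutation set $A_{n,k}$ to the vector set $\mathcal{A}_{n,k}$ so that $|\mathcal{A}_{n,k}|$ carries the same exponent $\tfrac{1+\epsilon_1}{2}$, and (ii) verifying that the boundary $\epsilon_2=1+\epsilon_1$ between the two regimes of \Lref{lem:BMAlower} is handled without a gap. Both turn out mild, so the substance of the proof is simply selecting the correct ball-size estimate for this radius regime and tracking that the numerator and denominator exponents combine to $\tfrac32+\tfrac{\epsilon_1}{2}-\epsilon_2$.
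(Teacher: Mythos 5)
Your proposal is correct and follows essentially the same route as the paper: the lower bound via the Gilbert--Varshamov estimate of \Lref{lem:upLowBound} combined with the $1<\delta\leq 2$ regime of \Tref{thm:KsphereUpBound}, and the upper bound via the sphere-packing half of \Lref{lem:ElowUp} combined with \Lref{lem:BMAlower}, with identical normalization arithmetic yielding $\tfrac{3}{2}+\tfrac{\epsilon_1}{2}-\epsilon_2$ on both sides. The only difference is that you explicitly justify two points the paper leaves implicit --- that $|\mathcal{A}_{n,k}|$ carries the same exponent $\tfrac{1+\epsilon_1}{2}$ as $|A_{n,k}|$, and that the seam $\epsilon_2=1+\epsilon_1$ falls under the second regime of \Lref{lem:BMAlower} with the same resulting exponent --- which strengthens rather than alters the argument.
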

\begin{proof} Let $k=\lceil n^{\epsilon_1}\rceil$ and $d=\lceil n^{\epsilon_2}\rceil$.
By Lemma \ref{lem:upLowBound} and Theorem \ref{thm:KsphereUpBound} it follows that there exists a constant $c$ such that\vspace{-1ex}
$$\frac{\log E(n,k,d)}{\log n!}\geq \frac{\log |A_{n,k}|}{\log n!}-\frac{\log c^nn^{(\epsilon_2-1)n}}{\log n!}.\vspace{-1ex}$$
Therefore,\vspace{-1ex}
$$ C(\epsilon_1,\epsilon_2)\geq \frac{1}{2}+\frac{\epsilon_1}{2}+1-\epsilon_2=\frac{3}{2}+\frac{\epsilon_1}{2}-\epsilon_2.\vspace{-1ex}$$

Similarly, by Lemmas \ref{lem:ElowUp} and \ref{lem:BMAlower} it follows that\vspace{-1ex}
$$\frac{\log E(n,k,d)}{\log n!}\leq \frac{\log |\mathcal{A}_{n,k}|}{\log n!}-\frac{\log \left(\frac{n^{\epsilon_2-1}}{2}\right)^n}{\log n!},\vspace{-1ex}$$
and hence,\vspace{-1ex}
$$C(\epsilon_1,\epsilon_2)\leq \frac{1}{2}+\frac{\epsilon_1}{2}+1-\epsilon_2.\vspace{-1ex}$$
Together we get, $C(\epsilon_1,\epsilon_2)=\frac{3}{2}+\frac{\epsilon_1}{2}-\frac{\epsilon_2}{2}$.
\end{proof}

Due to the lack of space we skip the details of the third case and thus summarize with the following corollary.
\begin{cor}
Let $0\leq \epsilon_1\leq 1$ and $0\leq \epsilon_2\leq 2$. Then\vspace{-1ex}
$$
C(\epsilon_1,\epsilon_2)= \begin{cases}
\frac{1}{2}+\frac{\epsilon_1}{2},& 0\leq \epsilon_2\leq 1,\\
\frac{3}{2}+\frac{\epsilon_1}{2}-{\epsilon_2} ,& 1<\epsilon_2\leq 1+\epsilon_1,\\
1-\frac{\epsilon_2}{2} ,& 1+\epsilon_1< \epsilon_2\leq 2.
\end{cases}\vspace{-1ex}
$$
\end{cor}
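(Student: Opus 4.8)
The first two branches of the formula are already in hand: the range $0\le\epsilon_2\le 1$ is exactly Theorem~\ref{thm:capLeq1}, and the range $1<\epsilon_2\le 1+\epsilon_1$ is exactly Theorem~\ref{thm:cap1toEpsilon}. So the whole remaining content of the corollary is the third branch, namely $C(\epsilon_1,\epsilon_2)=1-\epsilon_2/2$ for $1+\epsilon_1<\epsilon_2\le 2$. Throughout I write $k=\lceil n^{\epsilon_1}\rceil$ and $d=\lceil n^{\epsilon_2}\rceil$, and I match an upper and a lower bound on $\log E(n,k,d)/\log n!$.

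For the upper bound I would use the sphere-packing inequality of Lemma~\ref{lem:ElowUp} together with the ball estimate of Lemma~\ref{lem:BMAlower}. Since $\lfloor (d-1)/2\rfloor=\Theta(n^{\epsilon_2})$ and $\epsilon_2>1+\epsilon_1$, the relevant radius sits in the regime $1+\epsilon\le\delta<2$ of Lemma~\ref{lem:BMAlower} (with $\epsilon=\epsilon_1$, $\delta=\epsilon_2$), giving $\min_{\mathbf{x}}|\mathcal{B}_M(\mathcal{A}_{n,k},\mathbf{x},\lfloor (d-1)/2\rfloor)|\ge(n^{\epsilon_2-1+\epsilon_1}/c)^{n/2}$. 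Combining this with $|\mathcal{A}_{n,k}|\le(Cn^{1+\epsilon_1})^{n/2}$ --- which follows from the difference-counting argument of Lemma~\ref{lem:upper} applied verbatim to $H_n$, and is in fact easier there since no injectivity is needed --- yields $E(n,k,d)\le(C'n^{2-\epsilon_2})^{n/2}$ and hence $C(\epsilon_1,\epsilon_2)\le 1-\epsilon_2/2$.

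For the lower bound the naive Gilbert--Varshamov estimate of Lemma~\ref{lem:upLowBound} with Theorem~\ref{thm:KsphereUpBound} is too weak: it only produces $\tfrac32+\tfrac{\epsilon_1}{2}-\epsilon_2$, which is strictly below $1-\epsilon_2/2$ exactly when $\epsilon_2>1+\epsilon_1$. I would instead invoke the second (Manhattan) inequality of Lemma~\ref{lem:ElowUp}, $E(n,k,d)\ge|A_{n,k}|/\max_{\mathbf{x}}|\mathcal{B}_M(\mathcal{A}_{n,k},\mathbf{x},2d-1)|$, replace $|A_{n,k}|$ by $n^{\frac{1+\epsilon_1}{2}n(1+o(1))}$ via Corollary~\ref{cor:2NC}, and supply the matching \emph{upper} bound
$$\max_{\mathbf{x}\in\mathcal{A}_{n,k}}|\mathcal{B}_M(\mathcal{A}_{n,k},\mathbf{x},2d-1)|\le(C''n^{\epsilon_2-1+\epsilon_1})^{n/2}.$$
Granting this, the two exponents subtract to $\tfrac12(2-\epsilon_2)$ and give $C(\epsilon_1,\epsilon_2)\ge 1-\epsilon_2/2$, closing the case.

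The crux, and the step I expect to be the main obstacle, is the constrained-ball upper bound just displayed; observe that bounding it by the full Manhattan ball $|\mathcal{B}_M(H_n,\mathbf{x},2d-1)|\approx(cn^{\epsilon_2-1})^n$ is useless, since its exponent $n(\epsilon_2-1)$ exceeds the target $\tfrac n2(\epsilon_2-1+\epsilon_1)$ precisely when $\epsilon_2>1+\epsilon_1$, so the two-neighbor structure must genuinely be exploited. I would count a $\mathbf{y}\in\mathcal{A}_{n,k}$ in the ball by first fixing the set $I\subseteq[2,n]$ of positions where $|y_i-y_{i-1}|\le k$ (at most $2^{n-1}$ choices, with $|I|\ge\lfloor (n-1)/2\rfloor$); this splits $[n]$ into $s=n-|I|\le\lceil (n+1)/2\rceil$ \emph{runs} joined by small differences, after which $\mathbf{y}$ is determined by the $|I|$ small differences (at most $(2k+1)^{|I|}$ ways) and the $s$ run-anchor values. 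The delicate part is the anchors: writing the ball condition as $\sum_t\sum_{i\in\text{run }t}|a_t-c_i|\le 2d-1$ for fixed shifts $c_i$, convexity of each inner sum bounds the admissible $a_t$ by a quantity of order $1+b_t/L_t$, where $b_t$ is the budget spent on a run of length $L_t$ (the convexity also forces $b_t$ to dominate the spread of the $c_i$ inside that run, so the spread does not inflate the count). A budget-allocation optimization over $\sum_t b_t\le 2d-1$ then shows the product is largest when $s$ is maximal and the runs have length about two, producing a factor $(Cn^{\epsilon_2-1})^{n/2}$; multiplying by $(2k+1)^{|I|}=(Cn^{\epsilon_1})^{n/2}$ and absorbing the $2^{n-1}$ choices of $I$ into the base gives the claimed $(C''n^{\epsilon_2-1+\epsilon_1})^{n/2}$. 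Checking that short runs are extremal hinges on $n^{\epsilon_2-1}>n^{\epsilon_1}$, i.e.\ on $\epsilon_2>1+\epsilon_1$, which is exactly where the regime hypothesis is consumed and where the estimate must be handled with care.
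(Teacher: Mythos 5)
Your proposal is correct, and it is worth noting that for the third branch it supplies an argument the paper itself omits: the paper proves the first two branches exactly as you do (Theorems~\ref{thm:capLeq1} and~\ref{thm:cap1toEpsilon}) and then states ``due to the lack of space we skip the details of the third case,'' so there is no in-paper proof to match. Your reconstruction is faithful to the paper's declared methodology (the sphere-packing and Gilbert--Varshamov bounds of Lemma~\ref{lem:ElowUp}, the constrained-ball lower bound of Lemma~\ref{lem:BMAlower} in its $1+\epsilon\le\delta<2$ regime, the difference-map counting of Lemma~\ref{lem:upper} transplanted to $H_n$, and the technique of \cite{BM10}), and your diagnosis is exactly right on both fronts: the plain GV bound with Theorem~\ref{thm:KsphereUpBound} yields only $\tfrac32+\tfrac{\epsilon_1}{2}-\epsilon_2$, which falls below $1-\epsilon_2/2$ precisely when $\epsilon_2>1+\epsilon_1$, and bounding the ball by the unconstrained Manhattan ball is useless in the same regime, so a matching \emph{upper} bound on $\max_{\mathbf{x}}|\mathcal{B}_M(\mathcal{A}_{n,k},\mathbf{x},2d-1)|$ is indeed the crux. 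Two refinements on that step. First, your per-run anchor count $1+O(b_t/L_t)$ should be stated with the median argument: for $p_i=x_i-c_i$ one has $\sum_{i}|a-p_i|\ge\lceil L_t/2\rceil\,|a-\mathrm{med}(p)|$, so $\#\{a: f_t(a)\le b_t\}\le 1+4b_t/L_t$. Second, and more delicately, the budgets $b_t$ are not fixed in advance, so the honest count is over all tuples $(a_1,\dots,a_s)$ with $\sum_t f_t(a_t)\le r$; here the flat bottom of $f_t$ (whose width is of order $f_{\min,t}/L_t$, reflecting the spread of the $p_i$) and the number of budget allocations $\binom{R+s}{s}$ are each of order $(Cr/s)^s$, and a careless product of the two would square the bound. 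They share the same budget $r$, and a log-sum/concavity argument (with $\tau$ slots for positive excess levels and $s-\tau$ for flat bottoms, $\sum_t m_t\log(C\beta_t/m_t)\le s\log(Cr/s)$) collapses the total to $(Cr/s)^s$, which is what your remark that ``the spread does not inflate the count'' is implicitly invoking; it deserves to be made explicit. With that, the optimization over $s\le\lceil(n+1)/2\rceil$ runs (forced by the two-neighbor constraint, as you say) gives $(2k+1)^{n-s}(Cr/s)^s$ maximized at $s\approx n/2$ since $\epsilon_2-1>\epsilon_1$, yielding $(C''n^{\epsilon_2-1+\epsilon_1})^{n/2}$ and closing the case; only the endpoint $\epsilon_2=2$ needs a separate one-line remark ($d>\binom{n}{2}$ forces $E(n,k,d)=1$, matching $1-\epsilon_2/2=0$), since Lemma~\ref{lem:BMAlower} is stated for $\delta<2$. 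One cosmetic slip: your aside that the $H_n$ analogue of Lemma~\ref{lem:upper} needs ``no injectivity'' is not quite right --- the map to $(x_1,\text{differences})$ is still what is counted --- but it is trivially injective on $H_n$, so nothing is lost.
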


Lastly, we report on similar results for the asymmetric constraint. Let $\widetilde{C}(\epsilon_1,\epsilon_2)$ be the capacity of an asymmetric two neighbor $k$-constrained code with minimum inversion distance $d$, where $k=\lceil n^{\epsilon_1}\rceil$ and $d=\lceil n^{\epsilon_2} \rceil$. Following the same technique used in \cite{BM10} we have
\begin{theorem}
Let $0\leq \epsilon_1\leq 1$ and $0\leq \epsilon_2 \leq 2$. Then\vspace{-1ex}
$$
\widetilde{C}(\epsilon_1,\epsilon_2) \begin{cases}
1,& 0\leq \epsilon_2\leq 1,\\
2-\epsilon_2 ,& 1< \epsilon_2\leq 2.
\end{cases}\vspace{-1ex}
$$
\end{theorem}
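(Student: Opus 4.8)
The plan is to mirror the three-case analysis carried out for the symmetric constraint, but to exploit the one fact that makes the asymmetric setting much simpler: by Theorem~\ref{thm:A2NCcap} the constraint alone already has capacity $\widetilde{C}(\epsilon_1)=1$ for every $\epsilon_1$, so the constraint is asymptotically ``free'' and the entire answer is governed by the cost of the error-correction requirement. Consequently the problem collapses into only two regimes, $0\le\epsilon_2\le 1$ and $1<\epsilon_2\le 2$, rather than three, and the answer will come out independent of $\epsilon_1$. Throughout I would use $\log n! = n\log n - O(n)$, so that $\log n^n/\log n!\to 1$ and $\log (n^{\epsilon_2-1})^n/\log n!\to \epsilon_2-1$. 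Let $\widetilde{E}(n,k,d)$ denote the maximum size of a code in $B_{n,k}$ with minimum inversion distance $d$.

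For the lower bound I would first record the asymmetric analogue of the Gilbert--Varshamov estimate of Lemma~\ref{lem:upLowBound}: the same greedy argument gives $\widetilde{E}(n,k,d)\ge |B_{n,k}|/b_I(n,d-1)$, since an inversion ball of radius $d-1$ inside $B_{n,k}$ is contained in the corresponding ball in $S_n$. Dividing by $\log n!$ and using Theorem~\ref{thm:A2NCcap} gives $\log\widetilde{E}(n,k,d)/\log n! \ge 1 - \log b_I(n,d-1)/\log n! + o(1)$. Feeding in the upper bounds on $b_I$ from Theorem~\ref{thm:KsphereUpBound} with $\delta=\epsilon_2$, the ball term tends to $0$ when $0\le\epsilon_2\le 1$ and to $\epsilon_2-1$ when $1<\epsilon_2\le 2$, yielding $\widetilde{C}(\epsilon_1,\epsilon_2)\ge 1$ and $\widetilde{C}(\epsilon_1,\epsilon_2)\ge 2-\epsilon_2$ respectively.

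For the upper bound, the regime $0\le\epsilon_2\le 1$ is immediate, since $\widetilde{E}(n,k,d)\le n!$ forces $\widetilde{C}(\epsilon_1,\epsilon_2)\le 1$. For $1<\epsilon_2\le 2$ I would use that $B_{n,k}\subseteq S_n$, so any asymmetric $k$-constrained code of minimum inversion distance $d$ is in particular an (unconstrained) code in $S_n$ of the same minimum distance; hence $\widetilde{E}(n,k,d)$ is at most the maximum size of a code in $S_n$ with minimum inversion distance $d$, whose normalized logarithm converges to $2-\epsilon_2$ by the full-space capacity computation of \cite{BM10}. Combining the two directions gives $\widetilde{C}(\epsilon_1,\epsilon_2)=1$ for $0\le\epsilon_2\le 1$ and $\widetilde{C}(\epsilon_1,\epsilon_2)=2-\epsilon_2$ for $1<\epsilon_2\le 2$, in both cases independent of $\epsilon_1$.

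The main obstacle is the upper bound in the high-distance regime: Theorem~\ref{thm:KsphereUpBound} only supplies an \emph{upper} bound on $b_I(n,r)$, which is exactly what drives the Gilbert--Varshamov lower bound, whereas a sphere-packing upper bound on code size needs a matching \emph{lower} bound on ball sizes. I would obtain this either by invoking the full-space capacity of \cite{BM10} directly, as above, or, to keep the argument self-contained in the style of Lemmas~\ref{lem:ElowUp} and~\ref{lem:BMAlower}, by extending the asymmetric constraint to $H_n$ as a set $\mathcal{B}_{n,k}$ with $|\mathcal{B}_{n,k}|\le n^n$ and establishing a lower bound of order $(c\,n^{\epsilon_2-1})^n$ on the minimum Manhattan ball inside $\mathcal{B}_{n,k}$; the ratio then gives $\widetilde{E}(n,k,d)\le (c^{-1}n^{2-\epsilon_2})^n$ and the same limit $2-\epsilon_2$. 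The only genuinely nontrivial point is verifying that the weak asymmetric constraint admits such large Manhattan balls uniformly over all centers, and this is precisely where the independence from $\epsilon_1$ becomes visible: because the asymmetric constraint permits the coordinate values to spread across all of $[n]$, the relevant balls behave like full-space balls, which is why a single formula $2-\epsilon_2$ governs the whole range $1<\epsilon_2\le 2$ with no intermediate regime.
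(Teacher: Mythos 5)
Your situation here is special: the paper does not actually prove this theorem. It is stated with only the remark that it follows by ``the same technique used in \cite{BM10},'' and its proof is among those omitted for lack of space, so there is no detailed argument to diverge from; judged on its own merits, your proof is correct and is a faithful instantiation of that technique using the paper's own tools. Your lower bound is exactly the Gilbert--Varshamov argument of Lemma~\ref{lem:upLowBound} transplanted to $B_{n,k}$, combined with the ball estimates of Theorem~\ref{thm:KsphereUpBound} and with $\lim_{n\to\infty}\log|B_{n,k}|/\log n!=1$; note that this step needs the limit (or at least the liminf), not merely the limsup asserted in Theorem~\ref{thm:A2NCcap}, but that is supplied by the paper, since Lemma~\ref{lem:A2NCcard} lower-bounds $|\cC_n^{asym}|$ for every $n$ and the paper remarks that these limits exist. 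Your key structural observation --- that because the constraint has capacity $1$ it is asymptotically free, so for $1<\epsilon_2\leq 2$ the upper bound follows simply from $B_{n,k}\subseteq S_n$ together with the full-space sphere-packing capacity $2-\epsilon_2$ of \cite{BM10} (the inversion and Kendall metrics yield the same maximal code sizes, since $\sigma\mapsto\sigma^{-1}$ is a distance-preserving bijection) --- is sound, and it correctly explains both why only two regimes appear and why the answer is independent of $\epsilon_1$. This is a genuine simplification relative to the symmetric case, where $C(\epsilon_1)=\frac{1}{2}+\frac{\epsilon_1}{2}<1$ makes the full-space upper bound too weak and forces the constrained Manhattan-ball machinery of Lemmas~\ref{lem:ElowUp} and~\ref{lem:BMAlower}; in the asymmetric case the GV lower bound and the full-space upper bound already meet, so that machinery is unnecessary. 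The only soft spot is your alternative ``self-contained'' route via Manhattan balls inside an asymmetric analogue $\mathcal{B}_{n,k}\subseteq H_n$: there the uniform lower bound on ball sizes over all centers is asserted rather than proved; but since your primary route closes both bounds, that sketch is dispensable and does not leave a gap.
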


\section*{Acknowledgment}
The work of Sarit Buzaglo was supported in part by the U.S.-Israel Binational Science Foundation, Jerusalem, Israel, under Grant No. 2012016.
The work of Eitan Yaakobi was supported in part by Intellectual Ventures, an NSF grant CIF-1218005, and the U.S.-Israel Binational Science Foundation, Jerusalem, Israel, under Grant No. 2010075.

\end{document}